\newcommand{\comments}[1]{}
\let\counterwithin\relax  
\definecolor{dark-gray}{gray}{0.3}
\definecolor{dkgray}{rgb}{.4,.4,.4}
\definecolor{dkblue}{rgb}{0,0,.5}
\definecolor{medblue}{rgb}{0,0,.75}
\definecolor{rust}{rgb}{0.5,0.1,0.1}
\newtheoremstyle{myThm} 
    {\topsep}                    
    {\topsep}                    
    {\itshape}                   
    {}                           
    {\sffamily\bfseries}                   
    {.}                          
    {.5em}                       
    {}  
\newtheoremstyle{myRem} 
    {\topsep}                    
    {\topsep}                    
    {}                   
    {}                           
    {\sffamily}                   
    {.}                          
    {.5em}                       
    {}  
\newtheoremstyle{myDef} 
    {\topsep}                    
    {\topsep}                    
    {}                   
    {}                           
    {\sffamily\bfseries}                   
    {.}                          
    {.5em}                       
    {}  
\theoremstyle{myThm}
\newtheorem{theorem}{Theorem}[section]
\newtheorem{proposition}[theorem]{Proposition}
\theoremstyle{myRem}
\newtheorem{remark}[theorem]{Remark}
\theoremstyle{myDef}
\let\originalleft\left
\let\originalright\right
\renewcommand{\left}{\mathopen{}\mathclose\bgroup\originalleft}
\renewcommand{\right}{\aftergroup\egroup\originalright}
\definecolor{mygreen}{rgb}{0.1,0.75,0.2}
\newcommand{\nc}{\normalcolor}
\providecommand{\mathbbm}{\mathbb} 
\newcommand{\R}{\mathbbm{R}}
\newcommand{\G}{\mathcal{G}}
\renewcommand{\phi}{\varphi}
\newcommand{\eps}{\varepsilon}
\newcommand{\pih}{\pi_{\mbox {\tiny{\rm hyper}}}}
\newcommand{\piprior}{\pi_{\mbox {\tiny{\rm prior}}}}
\newcommand{\Nc}{\mathcal{N}}
\newcommand{\Jtp}{\J_{\mbox {\tiny{\rm TP}}}}
\newcommand{\Jp}{\J_{\mbox {\tiny{$p$}}}}
\newcommand{\g}{\,\vert\,}
\newcommand{\J}{{\mathsf{J}}}
\title{Hierarchical Ensemble Kalman Methods with\\ Sparsity-Promoting Generalized Gamma Hyperpriors} 
\author{Hwanwoo Kim, Daniel Sanz-Alonso, and Alexander Strang}
\date{University of Chicago}
\makeatletter\@addtoreset{section}{part}\makeatother%
\numberwithin{equation}{section}
\newcommand{\upperRomannumeral}[1]{\uppercase\expandafter{\romannumeral#1}}
\begin{document}
\maketitle 


\begin{abstract}
 This paper introduces a computational framework to incorporate flexible regularization techniques in ensemble Kalman methods for nonlinear inverse problems. The proposed methodology approximates  the \emph{maximum a posteriori} (MAP) estimate of a hierarchical Bayesian model characterized by a conditionally Gaussian prior and generalized gamma hyperpriors. Suitable choices of hyperparameters yield sparsity-promoting regularization. We propose an iterative algorithm for MAP estimation, which alternates between updating the unknown with an ensemble Kalman method and updating the hyperparameters in the regularization to promote sparsity. The effectiveness of our methodology is demonstrated in several computed examples, including compressed sensing and subsurface flow inverse problems. 
\end{abstract}



\section{Introduction}\label{sec:introduction}
Ensemble Kalman methods are a family of derivative-free, black-box optimization algorithms that rely on Kalman-based formulas to propagate an ensemble of interacting particles. Propagating an ensemble, rather than a single estimate, is advantageous when solving high-dimensional inverse problems with complex forward maps: the ensemble provides preconditioners and surrogate forward map derivatives to accelerate the optimization. This paper sets forth a new computational framework to leverage ensemble Kalman methods for the numerical solution of nonlinear inverse problems with flexible regularizers beyond standard $\ell_2$-penalties. In order to minimize objective functions that cannot be expressed in the usual nonlinear least-squares form, our approach introduces auxiliary variables which enable the use of ensemble Kalman methods within a reweighted least-squares procedure.
We show the effectiveness of our framework  in illustrative examples, including compressed sensing and subsurface flow inverse problems.

We will adopt a Bayesian viewpoint to derive our methodology. Specifically, the minimizer approximated by our main algorithm corresponds to the \emph{maximum a posteriori} (MAP) estimate of a hierarchical Bayesian model with conditionally Gaussian prior and generalized gamma hyperpriors.
 We propose an iterative algorithm for MAP estimation, which alternates between updating the unknown with an ensemble Kalman method and updating the hyperparameters in the regularization. The resulting method imposes sparsity while preserving the computational benefits of ensemble Kalman methods. In particular, for linear settings and under suitable conditions on the prior hyperparameters, our iterative scheme is globally convergent and gives meaningful uncertainty quantification in the large ensemble limit. Moreover, our approach retains convexity of the objective for certain nonlinear forward maps. Most importantly, numerical experiments  demonstrate the usefulness of prior hyperparameters that result in non-convex objectives but strongly promote sparsity.

\subsection{Related Work}
 Ensemble Kalman methods, overviewed in \cite{evensen2009data,chada2020iterative}, were first developed as scalable filtering schemes for high-dimensional state estimation in numerical weather forecasting \cite{evensen1994sequential,evensen1996assimilation}.
Since then, they have become popular algorithms in data assimilation, inverse problems, and machine learning. 
The papers \cite{gu2007iterative,li2007iterative,reynolds2006iterative} pioneered the development of ensemble Kalman methods for inverse problems in petroleum engineering and the geophysical sciences. Similar algorithms were introduced in \cite{iglesias2013ensemble,iglesias2016regularizing} inspired by classical regularization schemes  \cite{hanke1997regularizing}. 
Ensemble Kalman methods have grown into a rich family of computational tools for the numerical solution of inverse problems; our computational framework incorporates sparsity-promoting regularization into any of the numerous existing variants \cite{chada2020iterative}. For illustration purposes, this paper will consider two implementations based on the \emph{iterative ensemble Kalman filter} (IEKF) and the \emph{iterative ensemble Kalman filter with statistical linearization} (IEKF-SL). All necessary methodological background will be provided in Section \ref{sec:mainalgorithm} below.
Recent theoretical work on ensemble Kalman methods has established continuous-time and mean-field limits, as well as various convergence results, e.g. \cite{schillings2017analysis,blomker2019well,blomker2018strongly,chada2019convergence,herty2018kinetic,ding2019ensemble,kovachki2019ensemble,huang2022efficient}.

 Despite the effectiveness of ensemble Kalman methods in high-dimensional nonlinear inverse problems, efforts to broaden their scope by accommodating a wider range of regularization techniques are only starting to emerge. 
The paper \cite{chada2019tikhonov} introduced Tikhonov $\ell_2$-regularization and \cite{lee2021lp} generalized this idea using a transformation to turn $\ell_2$-penalties into $\ell_p$-penalties. This latter work highlights the key importance of suitably regularizing the inverse problem to achieve sparse reconstructions with ensemble Kalman methods. However, the transformation in \cite{lee2021lp} involves a term that is exponential in  $\frac{2}{p},$ which causes overflow for small $p$ and limits the choice of penalties that can be implemented in practice. Another approach for imposing sparsity-promoting regularization through thresholding was introduced in \cite{schneider2020imposing}. Cross-entropy loss \cite{kovachki2019ensemble} and
logistic loss \cite{SRPR21} have also been recently considered.

 The hierarchical prior model used to derive our methodology, along with an \emph{iterative alternating scheme} (IAS) for MAP estimation, were introduced in \cite{calvetti2020sparse} for sparse linear inverse problems. In fact, our computational framework builds on a series of articles \cite{calvetti2019hierachical,calvetti2020sparsity,calvetti2019brain,calvetti2015hierarchical} that considered (generalized) gamma hyperpriors for sparse solution of \emph{linear} inverse problems. This line of work also developed the IAS algorithm to compute the MAP estimate, which alternates between updating the unknown using conjugate gradient with early stopping and updating the regularization using closed formulas. Extensions to tackle both parameter estimation and uncertainty quantification tasks were introduced in \cite{agrawal2021variational,law2021sparse} using variational inference.  The IAS algorithm has been successfully implemented in applied linear inverse problems, including brain activity mapping from magnetoencephalography and identification of dynamics from time series data \cite{calvetti2015hierarchical,calvetti2019brain,agrawal2021variational}. These hierarchical Bayesian techniques are rooted in a broader literature on signal processing with emphasis on sparsity \cite{gorodnitsky1997sparse,daubechies2010iteratively} and are inspired by the classical reweighted least-squares algorithm \cite{green1984iteratively}.
Sparsity-promoting algorithms and models are also essential in statistical science \cite{tibshirani1996regression,carvalho2009handling}; our hierarchical approach has connections with empirical Bayes statistical methods \cite{robbins1992empirical} and with bilevel and data-driven methods for inverse problems \cite{bard2013practical,arridge2019solving}.

\nc

\subsection{Main Contributions}
The main contributions of this paper can be summarized as follows:
\begin{itemize}
    \item We introduce a flexible computational framework to incorporate regularization techniques in ensemble Kalman methods, including sparsity-promoting $\ell_p$-penalties with $p \in (0,1)$.
    \item Our framework generalizes the IAS algorithm  \cite{calvetti2018bayes, calvetti2019hierachical, calvetti2020sparse, calvetti2020sparsity} to nonlinear inverse problems.
    \item For linear or mildly nonlinear inverse problems, our framework provides uncertainty quantification; therefore, our methodology complements and generalizes variational inference techniques \cite{agrawal2021variational} that are only applicable for linear inverse problems with gamma hyperpriors.
    \item Our presentation gives a Bayesian interpretation to statistical methods that rely on $\ell_p$-penalties, such as \textit{least absolute shrinkage and selection operator} (LASSO) and adaptive LASSO \cite{tibshirani1996regression, zou2006adaptive}.
    \item We demonstrate the effectiveness of our methods in three computed examples: compressed sensing, a PDE-constrained inverse problem with convex forward map, and an elliptic inverse problem in subsurface flow. In addition to considering various regularization techniques, we present two implementations based on IEKF and IEKF-SL ensemble Kalman methods. The code used to reproduce our results can be found in \href{https://github.com/Hwanwoo-Kim/Lp-regularized-IEKF}{https://github.com/Hwanwoo-Kim/Lp-regularized-IEKF}, along with additional implementations based on Tikhonov ensemble Kalman inversion \cite{chada2019tikhonov}.  
\end{itemize}

\subsection{Outline}
Section \ref{sec: hierarchical} overviews the hierarchical model used to derive our computational framework. Section \ref{sec:mainalgorithm} introduces our methodology and discusses its theoretical underpinnings. Section \ref{sec: experiments} contains  numerical results. Section \ref{sec:Conclusions} closes.

\paragraph{Notation}
For matrix $P,$ we write $P\succ 0$ if $P$ is symmetric positive definite. For $P \succ 0,$ we denote by $\| \cdot \|^2_P := | P^{-1/2} \cdot |^2$ the squared Mahalanobis norm induced by the matrix $P,$ where $| \cdot |$ denotes the Euclidean norm. 


\section{Hierarchical Bayesian Model}\label{sec: hierarchical}
Consider the inverse problem of reconstructing an unknown $u \in \mathbb{R}^d$ from noisy data $y \in \mathbb{R}^k,$ related by
\begin{equation}\label{eq:inverseproblem}
y = \G(u) + \eps, \quad \quad \eps \sim \mathcal{N}(0, \Gamma).
\end{equation}
Here $\mathcal{G}: \mathbb{R}^d \to \mathbb{R}^k$ is a given forward map and $\eps$ represents a Gaussian measurement error with a given covariance $\Gamma \succ 0.$ Such problem setting arises in numerous applications in data assimilation, inverse problems,  machine learning, and statistics ---see for instance \cite{sanzstuarttaeb,chada2020iterative,AS10} and references therein.

To encode the prior belief that $u$ is sparse, we adopt the hierarchical model introduced in \cite{calvetti2020sparse}. First, let the components of $u$ be independent random variables 
\begin{equation}\label{CONDITIONAL_PRIOR_U}
    u_i \sim \Nc(0, \theta_i), \quad \theta_i>0, \quad 1 \le i \le d,
\end{equation}
with unknown variances $\theta_i.$  Then, small $\theta_i$ yields shrinkage in the estimation of the corresponding unknown component $u_i$. Equation \eqref{CONDITIONAL_PRIOR_U} defines a conditional Gaussian prior \begin{equation}\label{CONDITIONAL_PRIOR_Uv2}
    \pi(u\g\theta) \propto \frac{1}{\prod_{i=1}^d \sqrt{\theta_i}} \, \exp \left(-\frac{1}{2}\|u\|^2_{D_\theta} \right), \quad D_\theta := \text{diag}(\theta_1, \ldots, \theta_d).
\end{equation}
To modulate the level of sparsity, we set a hyperprior on $\theta$ from the generalized gamma distribution \cite{korolev2019generalized} given by
\begin{equation}\label{HYPER_PRIOR_THETA}
    \pi_{\text{hyper}}(\theta) := \prod_{i=1}^d \frac{|r|}{\Gamma(\beta)\vartheta_i^\beta}\theta_i^{r\beta-1}\text{exp}\left(-\frac{\theta_i^r}{\vartheta_i} \right) =  \frac{|r|^d}{\Gamma(\beta)^d} \prod_{i=1}^d \left(\frac{\theta_i^{r\beta -1}}{\vartheta_i^\beta}\right) \text{exp}\left(-\frac{\theta_i^r}{\vartheta_i}\right),
\end{equation}
where $\beta \in \mathbb{R}_{>0},\{\vartheta_i\}_{i=1}^d \subset \mathbb{R}_{>0}$ and $r \in \mathbb{R}\setminus \{0\}$. Notice that if $r = 1$, the hyperprior becomes a product of gamma distributions with a common shape parameter $\beta$ and an individual scale parameter $\vartheta_i$. Similarly, if $r = -1$, the hyperprior becomes a product of inverse gamma distributions with a common shape parameter $\beta$ and an individual scale parameter $\frac{1}{\vartheta_i}$. We refer to $r$ as the regularization parameter. In Section \ref{sec:mainalgorithm} we will demonstrate that the value of $r$ determines the level of regularization in the reconstruction of $u$.

Bayes's formula combines the likelihood function implied by the data model \eqref{eq:inverseproblem}
\begin{equation}
   \pi(y\g u,\theta) = \pi(y\g u) \propto \text{exp}\left(-\frac{1}{2}\|y - \mathcal{G}(u)\|_\Gamma^2 \right)
\end{equation}
together with the hierarchical prior $\piprior(u,\theta) := \pi(u \g \theta) \pih(\theta)$ defined by \eqref{CONDITIONAL_PRIOR_Uv2}--\eqref{HYPER_PRIOR_THETA} to give the posterior distribution
\begin{align*}
\pi(u, \theta \g y ) &\propto  \pi(y \g u,\theta) \piprior(u,\theta)  \\
&=   \pi(y \g u,\theta) \pi(u \g \theta) \pih(\theta)  
\propto \exp \bigl( -\J(u,\theta) \bigr),
\end{align*}
where 
\begin{equation}\label{sec2:objective}
\ooalign{
$\J(u,\theta ) := \overbrace{ \frac{1}{2} \| y - \G(u) \|_{\Gamma}^2 +  \frac{1}{2}\| u \|^2_{D_\theta}  }^{(a)} - \Bigl( r\beta - \frac32 \Bigr) \sum\limits_{i=1}^d \log \frac{\theta_i}{\vartheta_i} + \sum\limits_{i=1}^d \frac{\theta_i^r}{\vartheta_i}.$ \cr 
$\phantom{\J(u,\theta ) = \frac{1}{2} \| y - \G(u) \|^2  + \frac{1}{2}  } {\underbrace{ \phantom{ \,\,  \| u \|^2_{D_\theta}  - \Bigl( r\beta - \frac32 \Bigr) \sum_{i=1}^d \log \frac{\theta_i}{\vartheta_i} + \sum_{i=1}^d \frac{\theta_i^r}{\vartheta_i} } }_{(b)} }$ \cr 
}
\end{equation}
 The MAP estimate is the maximizer of the posterior density or, equivalently, the minimizer of the objective \eqref{sec2:objective}. The computational framework developed in the next section will minimize $\J(u,\theta)$ iteratively, alternating between minimizing the term $(a)$ with ensemble Kalman methods and minimizing the term $(b)$ using closed formulas for suitable choices of hyperparameters.

In the following section we aim to 1) propose a generic optimization method for the objective function in \eqref{sec2:objective} and build a connection with IAS \cite{calvetti2015hierarchical, calvetti2019hierachical}, which is only applicable when the forward map $\mathcal{G}$ is linear; 2) characterize a region of hyperparameter values $(r, \beta)$ for which the objective function is convex under suitable assumptions on $\mathcal{G}$; and 3) elucidate the relationship between the parameters $(r, \beta)$ and the regularization imposed on the reconstruction of $u$.



\section{Iterative Alternating Ensemble Kalman Filters}\label{sec:mainalgorithm}
In this section we propose an iterative optimization method to compute the MAP estimate. Our approach attempts to minimize the objective \eqref{sec2:objective} in a block coordinate-wise fashion. The first term $(a)$ can be minimized over $u$ using ensemble Kalman methods and the second term $(b)$ can be easily minimized over $\theta$ for suitable choices of hyperparameters. Motivated by these observations, we initialize $\theta^0$ and compute iteratively the following updates:
\begin{align}\label{eq:uupdate}
\begin{split}
    u^{\ell + 1} &=  \arg \min_u \J(u, \theta^{\ell})   =   \arg \min_u \frac{1}{2} \| y - \G(u) \|_{\Gamma}^2 +  \frac{1}{2}\| u \|^2_{D_{\theta^{\ell}}}\,\,,   \\
    \theta^{\ell + 1} &= \arg \min_\theta \J(u^{\ell + 1}, \theta) =  \arg \min_\theta \frac{1}{2}\| u^{\ell + 1} \|^2_{D_\theta} -
    \Bigl( r\beta - \frac32 \Bigr) \sum\limits_{i = 1}^d \log \frac{\theta_i}{\vartheta_i} + \sum\limits_{i=1}^d \frac{\theta_i^r}{\vartheta_i}\,\,,  
\end{split}    
\end{align}
until a stopping criterion is satisfied.
We will demonstrate that, for suitable choices of $(r, \beta, \{\vartheta_i\}_{i=1}^d),$ the above iterative procedure agrees with minimizing the $\ell_p$-penalized objective
$$
\Jp(u): = \frac{1}{2}\|y-\mathcal{G}(u)\|_\Gamma^2 + \frac{1}{2} \sum_{i=1}^d w_i |u_i|^p, 
$$
where the regularization level $p$ is determined by $r,$ and the weight $w_i$ is determined by $p$ and the scale parameter $\vartheta_i$. In other words, we will show that the minimizer $u$ found by the iterative procedure \eqref{eq:uupdate} solves an $\ell_p$-regularized nonlinear optimization problem, provided appropriate hyperparameters $(r, \beta, \{\vartheta_i\}_{i=1}^d)$. The following two subsections describe the numerical implementation of each update rule and the choice of hyperparameters. Subsection \ref{ssec:mainalgorithms} contains the main algorithms, and Subsection \ref{ssec:convexity} provides sufficient conditions on the forward map and hyperarameters that ensure convexity of the objective \eqref{sec2:objective}.


\subsection{Updating $u$}\label{sec:update_U}
For a generic forward map $\mathcal{G}$, updating $u$ in \eqref{eq:uupdate} requires solving a nonlinear least-squares optimization problem. 
To this end, we will use ensemble Kalman methods designed to minimize Tikhonov-Phillips objectives of the form
\begin{equation}\label{sec3:tikhonov_philips_obj}
\Jtp(u) = \frac{1}{2}\|y-\mathcal{G}(u)\|^2_\Gamma + \frac{1}{2}\|u-m\|^2_P,
\end{equation}
where $ \Gamma, P \succ 0,$ and $m$ are given.
We remark, for later reference, that minimizing \eqref{sec3:tikhonov_philips_obj} can be interpreted as maximizing the posterior density with Gaussian likelihood and prior given by
\begin{align}
    \pi(y|u) &= \mathcal{N}\bigl(\G(u), \Gamma\bigr), \label{sec3:likelihood}\\
    \pi(u) & = \mathcal{N}(m,P) \label{sec3:prior}.
\end{align}
Since we are interested in the update \eqref{eq:uupdate}, we will take $m = 0$ throughout; $P = D_{\theta^{\ell}}$ will be iteratively updated in subsequent developments.


Starting from an \emph{initial ensemble} $\{u_0^{(n)} \}_{n=1}^N,$ ensemble Kalman methods update the ensemble in an artificial discrete-time index $t$
$$ \{u_t^{(n)} \}_{n=1}^N \mapsto \{u_{t+1}^{(n)} \}_{n=1}^N$$
using Kalman formulas that promote fitness of the ensemble with data and with the prior distribution \eqref{sec3:prior} implied by the Tikhonov-Phillips regularization. The goals of fitting data and fitting the prior are balanced using an ensemble-based Kalman gain matrix, as well as certain additional random perturbation terms that control the long-time distribution of the ensemble in the large $N$ limit. We view the iteration subscript $t$ as a discrete-time index because the evolution of the ensemble may arise from the discretization of a system of stochastic differential equations, coupled by the ensemble-based Kalman gain. Once the ensemble reaches statistical equilibrium, we report the ensemble mean as an approximate solution of the optimization problem of interest. 
In this subsection we introduce two types of ensemble Kalman methods,
IEKF and IEKF-SL, which we employ in the update of $u$. These two algorithms differ in how they construct the Kalman gain and in how they introduce random perturbations in the ensemble update. To formulate these algorithms we need some notation. Given the ensemble $\{u_t^{(n)} \}_{n=1}^N$ at time $t$, we denote ensemble empirical means by
\begin{align*}
m_t &= \frac{1}{N} \sum_{n=1}^N u_t^{(n)}, \quad \quad g_t = \frac{1}{N} \sum_{n=1}^N \G \bigl(u_t^{(n)}\bigr),
\end{align*}
and we denote empirical covariances and cross-covariances by
\begin{align*}
P_t^{uu}  &= \frac{1}{N} \sum_{n=1}^N (u_t^{(n)} - m_t)  (u_t^{(n)} - m_t)^\top,  \quad \quad  \\
P_t^{uy} &= \frac{1}{N} \sum_{n=1}^N \bigl( u_t^{(n)} - m_t  \bigr)   \bigl( \G(u_t^{(n)}) - g_t  \bigr)^\top,  \\
P_t^{yy} &= \frac{1}{N} \sum_{n=1}^N \bigl(\G(u_t^{(n)}) - g_t \bigr)  \bigl(\G(u_t^{(n)}) - g_t \bigr)^\top.
\end{align*}
We will use repeatedly the principle of \textit{statistical linearization} ---see \cite{ungarala2012iterated,chada2020iterative}--- which we recall briefly.
Notice that if $\mathcal{G}$ is linear, i.e., $\mathcal{G}(u) = Gu$, we have 
$$
P_t^{uy} = P_t^{uu}G^\top. 
$$ 
The principle of statistical linearization is to approximate the Jacobian of a generic nonlinear map $\mathcal{G}$ at time $t$ using the above identity, namely
$$
\mathcal{G}'(u_t^{(n)})  \approx (P_t^{uy})^\top (P_t^{uu})^{-1}  =: G_t^N , \quad \quad n = 1, \ldots, N.
$$
Here and henceforth, $(P_t^{uu})^{-1}$ denotes the pseudoinverse of $P_t^{uu}.$
We next present IEKF, IEKF-SL, and a unified framework providing more insights on these algorithms.  

\subsubsection{Iterative Ensemble Kalman Filter (IEKF)}
Here we present the IEKF method introduced in \cite{chada2020iterative}. The pseudocode is given in Algorithm \ref{itenKF}. We refer to \cite{ungarala2012iterated, reynolds2006iterative} for other variants of IEKF.
\begin{algorithm}[H]
\caption{Iterative Ensemble Kalman Filter (IEKF) \label{itenKF}} 
	\STATE {\bf Input}: Number $T$ of iterations, step-size $\alpha,$ covariance $P\succ 0.$ \\
    \STATE {\bf Initialization}: Draw initial ensemble $u_0^{(n)}\stackrel{\text{i.i.d.}}{\sim} \Nc(0, P)$, \, $1 \le n \le N.$    \\ 
    \STATE {\bf For} $t = 0, 1, \ldots T$ {\bf do}:
    \begin{enumerate}
    \item Set $G^N_t = (P_t^{uy})^\top (P_t^{uu})^{-1}.$
    \item Update the Kalman gain $K_{t,0}^N = P_0^{uu} (G^N_t)^\top \bigl(G^N_t P_0^{uu} (G^N_t)^\top + \Gamma\bigr)^{-1}. $ 
    \item For $1\le n \le N,$ draw $y_t^{(n)} \stackrel{\text{i.i.d.}}{\sim} \Nc(y,\alpha^{-1} \Gamma).$
    \item  For $1\le n \le N,$ update
    \begin{equation*}
    u_{t+1}^{(n)} = u_t^{(n)}  + \alpha  \Bigl\{ K_{t,0}^N \big( y_t^{(n)}  - \G(u_t^{(n)}) \big) + (I - K_{t,0}^N G_t^N)  \big( u_0^{(n)} - u_t^{(n)}  \big)   \Bigr\}.
    \end{equation*}
    \end{enumerate}
    \STATE {\bf Output}: Final ensemble mean $ m_T.$ 
\end{algorithm}
The Kalman gain $K_{t,0}^N$ is defined using the empirical covariance $P_0^{uu}$ of the initial ensemble, the approximated Jacobian $G_t^N$ of the forward map, and the covariance $\Gamma$ of the measurement error. In the update of ensemble members,  random perturbations are introduced only to the term which measures the discrepancy between data and the image of current ensemble members under the nonlinear map $\mathcal{G}$. Furthermore, the measure of fitness of ensemble members to the prior distribution is assessed by comparing each ensemble member with the corresponding initial ensemble member, drawn from the prior. 

Employing $P_0^{uu}$ in the construction of the Kalman gain $K_{t,0}^N$ and $u_0^{(n)}$ in the update of $u_t^{(n)}$ implicitly regularizes the ensemble by forcing all its members to remain in the linear span of the initial ensemble. This \textit{initial subspace property} holds for several ensemble Kalman methods \cite{iglesias2013ensemble, chada2019tikhonov, chada2020iterative}.
In addition, \cite{chada2020iterative} has shown that, in a linear forward map setting with step-size $\alpha = 1$, the ensemble mean computed by the IEKF algorithm converges (as $N \to \infty$) in a single step $(T=1)$  to the posterior mean with likelihood and prior given by \eqref{sec3:likelihood}--\eqref{sec3:prior}. The next three remarks discuss the choice of prior covariance, stopping criteria, and step-size with pointers to the literature. 

\begin{remark}
In ensemble Kalman methods, the prior covariance $P$ typically incorporates application-specific knowledge. For instance, the initial ensemble may be drawn from a uniform or log-normal distribution whose support reflects prior information \cite{iglesias2021adaptive, schneider2017earth}. Instead of sampling the initial ensemble, one can specify it deterministically using the first principal components of a suitable covariance model \cite{iglesias2013ensemble}. These considerations may be used to determine a suitable initialization $P = D_{\theta^0}$ for our main algorithms in Section \ref{ssec:mainalgorithms}.
\end{remark}

\begin{remark}
Instead of providing a total number $T$ of iterations, a stopping rule can be used. For instance, one could use \textit{Morozov's discrepancy principle}; continue the iteration until the discrepancy between the data and the forward mapping of $m_t$ falls below the noise level, i.e.,
$$
|y - \mathcal{G}(m_t) | \le \sqrt{\text{tr}(\Gamma}).
$$
\end{remark}
\begin{remark}
For simplicity we will consider a constant and fixed step-size $\alpha.$
The step-size can be chosen on-line with a line search method based on \textit{Wolfe}'s condition or \textit{ad hoc} procedures introducing additional hyperparameters \cite{gu2007iterative}. Non-constant and adaptive step-sizes have been employed in \cite{chada2019tikhonov, chada2019convergence,iglesias2021adaptive}. 
\end{remark}

\subsubsection{Iterative Ensemble Kalman Filter with Statistical Linearization (IEKF-SL)}
Here we present the IEKF-SL introduced in \cite{chada2020iterative}. The pseudocode is provided in Algorithm \ref{algorithm:IEKF_new}.

\begin{algorithm}[H]
	\caption{IEKF with Statistical Linearization (IEKF-SL) \label{algorithm:IEKF_new}}
	\STATE {\bf Input}: Number $T$ of iterations, step-size $\alpha,$ covariance $P\succ 0.$ \\
\STATE {\bf Initialization}: Draw initial ensemble $u_0^{(n)}\stackrel{\text{i.i.d.}}{\sim} \Nc(0, P)$, \, $1 \le n \le N.$    \\ 
	\STATE {\bf For} $t = 0, 1, \ldots T$ {\bf do:}
	\begin{enumerate}
	\item Set $G_t^N = (P_t^{uy})^\top (P_t^{uu})^{-1}$.
		\item Update the Kalman gain $K_t^N = P (G_t^N)^\top \left(G_t^N P (G_t^N)^\top + \Gamma\right)^{-1}.$
		\item For $1\le n \le N,$ draw $y_t^{(n)} \stackrel{\text{i.i.d.}}{\sim} \Nc(y, 2\alpha^{-1} \Gamma)$, $m_t^{(n)} \stackrel{\text{i.i.d.}}{\sim} \Nc(0, 2\alpha^{-1} P).$ 
		\item For $1\le n \le N,$ set
		\begin{equation*}
			u_{t+1}^{(n)} = u_t^{(n)}+ \alpha \Bigl\{ K_t^N \big( y_t^{(n)}  - \G(u_t^{(n)}) \big) + (I - K_t^N G_t^N)  \big( m_t^{(n)} - u_t^{(n)} \big)   \Bigr\}.
		\end{equation*}
	\end{enumerate}
\STATE {\bf Output}: Final ensemble mean $m_T.$
\end{algorithm}
Unlike IEKF, IEKF-SL constructs the Kalman gain $K_t^N$ using the prior covariance $P$, the approximated Jacobian $G_t^N$ of the forward map, and the covariance $\Gamma$ of the measurement error. Furthermore, in the update of ensemble members, it introduces random perturbations to both the data and prior terms. The measure of fitness of ensemble members to the prior distribution is assessed by comparing each member with a perturbed prior mean.

Although IEKF-SL does not have the initial subspace property, it has been shown to achieve superior performance in a variety of inverse problems \cite{chada2020iterative}. In contrast to IEKF, the ensemble empirical mean and covariance of IEKF-SL converge, as $\alpha \to 0$ and $N, T \to \infty$, to the true posterior mean and covariance under the likelihood and prior model \eqref{sec3:likelihood}--\eqref{sec3:prior} when the forward map is linear. Therefore, for mildly nonlinear problems, IEKF-SL can be used to build approximate credible intervals for the reconstruction, allowing us to quantify uncertainties.

\begin{remark}
As in IEKF, one may consider using \textit{Morozov's discrepancy principle} and choosing the step-sizes adaptively. 
\end{remark}

\subsubsection{Unified Framework through Stochastic Differential Equations}
Both IEKF and IEKF-SL can be viewed as ensemble-based stochastic approximations of the deterministic extended Kalman filter. The extended Kalman filter finds the minimum of the objective function given in \eqref{sec3:tikhonov_philips_obj} by sequentially updating the initial guess $u_0$ according to the following rule:
$$
u_{t+1} = u_t + \alpha \Bigl( K_t  \big( y  - \mathcal{G}(u_t) \big) + (I - K_t G_t)  \big(m - u_t  \big)   \Bigr),
$$ 
where $\alpha > 0$ is a step-size, $G_t = \mathcal{G}'(u_t)$ is the Jacobian of $\mathcal{G}$, and $K_t = PG_t^\top(G_tPG_t^\top + \Gamma)^{-1}$ is the Kalman gain matrix \cite{chada2020iterative}. Setting $C_t = (I-K_tG_t)P$, we get from Woodbury's matrix inversion lemma that $K_t = C_tG_t^\top \Gamma^{-1}.$ Hence we can rewrite the preceding update rule as 
\begin{equation}\label{eq:ExKF}
    u_{t+1} = u_t + \alpha C_t\Bigl( G_t^\top \Gamma^{-1}\big( y  - \mathcal{G}(u_t) \big) + P^{-1} \big(m - u_t  \big)   \Bigr),
\end{equation}
which agrees  with a Gauss-Newton iteration applied on the Tikhonov-Phillips objective \eqref{sec3:tikhonov_philips_obj}, see \cite{bell1993iterated}. 
For small step-size $\alpha,$ one can view \eqref{eq:ExKF} as a discretization of the following differential equation, which describes the continuum version of the discrete trajectories of the iterates from the extended Kalman filter:
$$
\frac{du_s}{ds} = C_s\Bigl(\mathcal{G}'(u_s)^\top \Gamma^{-1} \bigl(y-\mathcal{G}(u_s)\bigr) + P^{-1}(m-u_s) \Bigr),
$$
where $C_s$ acts as a preconditioner. Using the identities $C_t = (I-K_tG_t)P$ and $K_t = C_tG_t^\top \Gamma^{-1}$, one can show that 
\begin{equation}\label{sec2:pre_condition_disc}
C_t^{-1} = G_t^\top  \Gamma^{-1} G_t + P^{-1},
\end{equation}
which leads, in the continuum limit, to
\begin{equation}\label{sec2:pre_condition_cont}
C_s = \bigl(\mathcal{G}'(u_s)^\top \Gamma^{-1}\mathcal{G}'(u_s) + P^{-1}\bigr)^{-1}.
\end{equation}

To give rise to an ensemble of random particles that roughly follow the continuous trajectory we defined, we employ the above deterministic differential equation as our drift term and introduce two different diffusion terms to obtain, for $1\le n \le N,$
\begin{align*}
    du^{(n)}_s &= C_s\Bigl(\mathcal{G}'(u_s)^\top \Gamma^{-1}\bigl(y-\mathcal{G}(u_s)\bigr) + P^{-1}(m-u_s) \Bigr) ds + C_s\mathcal{G}'(u_s)^\top \Gamma^{-\frac{1}{2}}dW^{(n)}_s, \\
    du^{(n)}_s &= C_s\Bigl(\mathcal{G}'(u_s)^\top \Gamma^{-1} \bigl(y-\mathcal{G}(u_s)\bigr) + P^{-1}(m-u_s) \Bigr) ds + \sqrt{2C_s}dW^{(n)}_s.
\end{align*}
Discretization of the above stochastic differential equations, together with ensemble-based approximation of the Jacobian of $\G,$ gives IEKF and IEKF-SL. First, applying Euler-Maruyama we get, for $1\le n \le N,$
\begin{align*}
    u_{t+1}^{(n)} &= u_t^{(n)} +  \alpha C_t\Bigl(G_t^\top \Gamma^{-1}\bigl(y-\mathcal{G}(u_t^{(n)})\bigr) + P^{-1}(m-u_t^{(n)}) \Bigr) + \sqrt{\alpha}
    C_t G_t^\top \Gamma^{-\frac{1}{2}}Z_t, \\
    u_{t+1}^{(n)} &= u_t^{(n)} +  \alpha C_t\Bigl(G_t^\top \Gamma^{-1} \bigl(y-\mathcal{G}(u_t^{(n)})\bigr) + P^{-1}(m-u_t^{(n)}) \Bigr) + \sqrt{\alpha}\sqrt{2C_t}Z_t,
\end{align*}
where $Z_t \sim \mathcal{N}(0, I)$. From the form of preconditioner in \eqref{sec2:pre_condition_disc}, one can show that
\begin{align*}
    \sqrt{\alpha}\sqrt{2C_t}Z_t &
\stackrel{\text{d}}{=} \sqrt{2\alpha}C_t G_t^\top \Gamma^{-\frac{1}{2}} Z_{t}^y + \sqrt{2\alpha}C_t P^{-\frac{1}{2}}Z_{t}^m,
\end{align*}
where $Z_t^y, Z_t^m \sim\stackrel{\text{i.i.d.}}{\sim} \mathcal{N}(0, I)$. By introducing randomness to $y$ through $Z_{i}^y$ and $m$ through $Z_{i}^m$, we get the following update rules, for $1 \le n \le N,$
\begin{align}
    u_{t+1}^{(n)} &= u_t^{(n)} +  \alpha C_t\Bigl(G_t^\top \Gamma^{-1} \bigl(y_t^{(n)}-\mathcal{G}(u_t^{(n)})\bigr) + P^{-1}(m-u_t^{(n)}) \Bigr), \label{sec3:first_update} \\ &~\text{with} ~y_t^{(n)} \sim \mathcal{N}(y, \alpha^{-1}\Gamma), \notag \\
    u_{t+1}^{(n)} &= u_t^{(n)} +  \alpha C_t\Bigl(G_t^\top \Gamma^{-1} \bigl(y_t^{(n)}-\mathcal{G}(u_t^{(n)})\bigr) + P^{-1}(m_t^{(n)}-u_t^{(n)}) \Bigr), \label{sec3:second_update} \\
    &~\text{with} ~y_t^{(n)} \sim \mathcal{N}(y, 2\alpha^{-1}\Gamma), m_t^{(n)} \sim \mathcal{N}(y, 2\alpha^{-1}P) \notag.
\end{align}

\begin{figure}
\centering
\includegraphics[height = 0.17\textwidth, width = 0.24\textwidth]{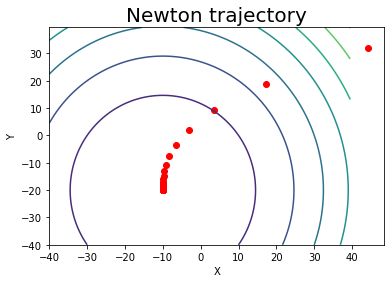}
\includegraphics[height = 0.17\textwidth,width = 0.24\textwidth]{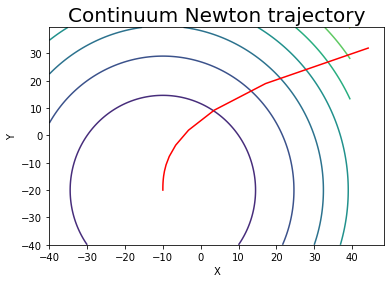}
\includegraphics[height = 0.17\textwidth,width = 0.24\textwidth]{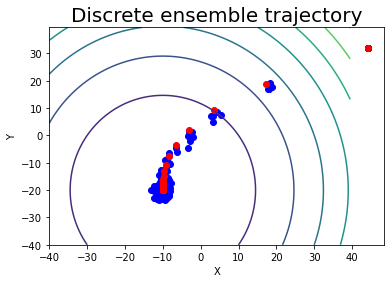}
\includegraphics[height = 0.17\textwidth,width = 0.24\textwidth]{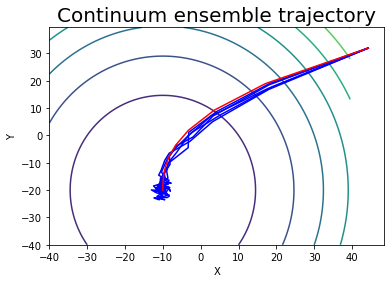}
\label{3}
\caption{Leftmost: Newton iteration. Middle-left: continuum Newton trajectory. Middle-right: ensemble Kalman iteration. Rightmost: continuum ensemble trajectory.}
\end{figure}
With random ensemble updates of the form \eqref{sec3:first_update} and \eqref{sec3:second_update}, one can avoid computing the Jacobian of $\mathcal{G}$ by using statistical linearization, which leads to a coupling of the stochastic dynamics. Doing so, we will derive IEKF and IEKF-SL.

We first consider \eqref{sec3:first_update}. If we approximate $m$ by $u_0^{(n)}$, $P$ by $P_0^{uu}$ and use statistical linearization $G_t^N$ in place of the  Jacobian $G_t$ in \eqref{sec3:first_update} we have, for $1\le n \le N,$
\begin{align*}
    u_{t+1}^{(n)} &= u_t^{(n)} +  \alpha C_{t,0}\Bigl((G_t^N)^\top \Gamma^{-1}\bigl(y_t^{(n)}-\mathcal{G}(u_t^{(n)})\bigr) + (P_0^{uu})^{-1}(u_0^{(n)}-u_t^{(n)}) \Bigr), 
\end{align*}
where
$$
C_{t,0} = \left((G_t^N)^\top  \Gamma^{-1} G_t^N + (P_0^{uu})^{-1}\right)^{-1},
$$
which leads to the IEKF scheme in Algorithm \ref{itenKF}. 

For IEKF-SL, we consider \eqref{sec3:second_update} and get, for $1\le n \le N,$
$$
u_{t+1}^{(n)} = u_t^{(n)} +  \alpha C_t^N\Bigl((G_t^N)^\top \Gamma^{-1}\bigl(y_t^{(n)}-\mathcal{G}(u_t^{(n)})\bigr) + P^{-1}(m_t^{(n)}-u_t^{(n)}) \Bigr),
$$
where 
$$
C_t^N =  \Bigl((G_t^N)^\top  \Gamma^{-1} (G_t^N) + P^{-1}\Bigr)^{-1},
$$
which leads to the IEKF-SL scheme in Algorithm \ref{algorithm:IEKF_new}.

\subsection{Updating $\theta$: Generalized Gamma and $\ell_p$-Regularization}\label{sssec:lp}
Once we solve the optimization problem for $u^{\ell + 1}$, the proposed coordinate-wise minimization strategy updates $\theta$  by setting
\begin{equation}\label{sec3:theta_update_objective}
\theta^{\ell + 1} =  \arg \min_\theta \frac{1}{2}\| u^{\ell + 1} \|^2_{D_\theta} +
\Bigl( r\beta - \frac32 \Bigr) \sum\limits_{i = 1}^d \log \frac{\theta_i}{\vartheta_i} + \sum\limits_{i=1}^d \frac{\theta_i^r}{\vartheta_i}.
\end{equation}
In this subsection we demonstrate an update rule for $\theta$ based on a particular choice of hyperparameter values in the hyperprior $\pi_{\text{hyper}}(\theta)$ in  \eqref{HYPER_PRIOR_THETA}. Specifically, we focus on $r, \beta$ satisfying $r\beta = \frac{3}{2}.$ Then, the general form of the objective function for $\theta$ in \eqref{sec3:theta_update_objective} becomes
\begin{equation}\label{gen_gamma_objective}
 \frac{1}{2}\|u\|_{D_\theta}^2  + \sum_{i=1}^d \frac{\theta_i^r}{\vartheta_i}.
\end{equation}
In order to minimize this function with respect to $\theta_i,$ we can restrict our attention to minimizing
\begin{equation}\label{sec3:theta_ind_update_obj}
\frac{u_i^2}{2\theta_i}  + \frac{\theta_i^r}{\vartheta_i}.
\end{equation}
For $r > 0$, one can observe that 
$$
\lim_{\theta_i \to 0} \left(\frac{u_i^2}{2\theta_i}  + \frac{\theta_i^r}{\vartheta_i}\right) = \infty \quad \,  ~\text{and}~ \quad \,  \lim_{\theta_i \to \infty} \left(\frac{u_i^2}{2\theta_i}  + \frac{\theta_i^r}{\vartheta_i}\right) = \infty.
$$
Therefore, the update for $\theta_i$ amounts to solving for the first order optimality condition
$$
-\frac{1}{2}\frac{u_i^2}{\theta_i^2} + \frac{r}{\vartheta_i}\theta_i^{r-1} = 0,
$$
which leads to the following update rule
\begin{equation}\label{outer_update}
\theta_i = \frac{\vartheta_i^{\frac{1}{r+1}}}{(2r)^{\frac{1}{r+1}}}|u_i|^\frac{2}{r+1}.
\end{equation}
For $r = 1$ the update precisely agrees with the update rule for IAS \cite{calvetti2019hierachical} with the sparsity parameter $\eta = 0$ and $i$th scale parameter $\theta_i^* = \frac{1}{\vartheta_i}$ in their notation. 

Plugging the update rule \eqref{outer_update} back to the objective \eqref{gen_gamma_objective}, we get
$$
\sum_{i=1}^d  \left(\frac{u_i^2}{2\theta_i} + \frac{\theta_i^r}{\vartheta_i^r}  \right) = \hspace{-0.1cm} \frac{r+1}{(2r)^{\frac{r}{r+1}}}\sum_{i=1}^d \frac{1}{\vartheta_i^{\frac{1}{r+1}}}|u_i|^{\frac{2r}{r+1}}.
$$
By setting $p := \frac{2r}{r+1} \in (0, 2)$, the objective function given in \eqref{sec2:objective} becomes
\begin{equation}\label{sec2:low_objective}
\Jp(u) : = \frac{1}{2}\|y-\mathcal{G}(u)\|_\Gamma^2 + C_r \sum_{i=1}^d w_{i,r} |u_i|^{p}, ~C_r = \frac{r+1}{(2r)^{\frac{r}{r+1}}}, ~w_{i, r} = \frac{1}{\vartheta_i^{\frac{1}{r+1}}},
\end{equation}
which can be viewed as an  $\ell_p$-regularized problem for any $p \in (0, 2)$. By adjusting $r$, one can impose different types of regularization. In particular, for $r = 1$, we have an $\ell_1$-regularized problem, whose natural Bayesian interpretation sets the gamma distribution with shape parameter $\beta = \frac{3}{2}$ and scale parameter $\vartheta_i$ as the prior for $\theta_i$. Furthermore, prior knowledge on the units/scales of each $u_i$ is captured by the $w_{i,r}$ terms, determined by the hyperparameters $\vartheta_i$ and the regularization parameter $r$. Such hyperprior-based component reweighting resembles the well-known adaptive LASSO \cite{zou2006adaptive}, which includes component-wise weights when solving $\ell_1$-minimization LASSO problems to remove the bias induced from the $\ell_1$-regularization term. In contrast to adaptive LASSO, our reweighting is motivated as a step towards finding the MAP estimate of a hierarchical Bayesian model.


\subsection{Main Algorithms}\label{ssec:mainalgorithms}
Our proposed iterative methodology combines the algorithmic ideas introduced in Subsections \ref{sec:update_U} and \ref{sssec:lp} to compute the $u$ and $\theta$ updates in \eqref{eq:uupdate}. The procedure is summarized in Algorithm \ref{alg_1}.

\begin{algorithm}[H]
\caption{$\ell_p$-IEKF and $\ell_p$-IEKF-SL \label{alg_1}}
\STATE   {\bf Input}: Initial $\theta^{0}$, hyperparameters $r$ and $\{\vartheta_i\}_{i=1}^d$, step-size $\alpha,$ number $T$ of inner iterations. \\
\STATE {\bf Iterate} (outer iteration) until convergence: 
\begin{enumerate}
   \item Update $u^{\ell + 1}:= m_T$ running IEKF/IEKF-SL (inner iteration) with a step size $\alpha$ and an initial covariance $P:=
   D_{\theta^{\ell}}. $
    \item Update $\theta_i^{\ell + 1} := \frac{\vartheta_i^{\frac{1}{r+1}}}{(2r)^{\frac{1}{r+1}}}|u_i^{\ell + 1}|^p$, where $p := \frac{2}{r+1}$.
    \item $\ell \rightarrow \ell+1.$
\end{enumerate}
\end{algorithm}
The two coordinate-wise optimization steps serve two distinct purposes. When optimizing for $u$, the reconstruction of the unknown is updated with a given regularization; when optimizing for $\theta,$ the regularization is updated. The algorithm involves two nested iterations. First, each $u$ update runs IEKF/IEKF-SL for $T$ iterations. Second, $u$ and $\theta$ are iteratively updated, in alternating fashion, until convergence. We call the first type of iteration \emph{inner iteration} and the latter type \emph{outer iteration}. As outer iterations update $\theta$, we will also refer to them as outer regularization iterations.

The proposed method utilizes an ensemble to minimize 
\begin{equation}\label{sec33:obj}
\J(u,\theta) = \frac{1}{2}\|y-\mathcal{G}(u)\|_\Gamma^2 + \frac{1}{2}\|u\|_{D_\theta}^2  + \sum_{i=1}^d \frac{\theta_i^r}{\vartheta_i}
\end{equation}
 in a coordinate-wise fashion to obtain the minimum of the lower-dimensional objective 
\begin{equation}\label{sec33:objp}
\Jp(u) = \frac{1}{2}\|y-\mathcal{G}(u)\|_\Gamma^2 + C_r \sum_{i=1}^d w_{i, r}|u_i|^{p}.
\end{equation}
One can show that $\J(u,\theta)$ agrees with $\Jp(u)$ along the manifold 
$$
\left\{(u, \theta): \theta_i = \frac{\vartheta_i^{\frac{1}{r+1}}}{(2r)^{\frac{1}{r+1}}}|u_i|^\frac{2}{r+1}\right\}.
$$
Hence by  element-wise optimizing $\J(u,\theta)$,  we can recover the minimizer of $\Jp(u)$. 

Algorithm \ref{alg_1} contains the pseudocode for $\ell_p$-IEKF and $\ell_p$-IEKF-SL. Compared to the method proposed in \cite{lee2021lp}, which leads to an overflow when using $\ell_p$-penalties for $p < 0.7$, our method can accommodate penalties with smaller $p$ values that promote sparsity more strongly.  For $r \in [1,2)$ under a linear forward map setting, the objective function in \eqref{sec33:obj} is globally convex, guaranteeing the existence and uniqueness of a global minimizer of \eqref{sec33:objp}. For $r \in (0,1)$, the objective function in \eqref{sec33:obj} may not be convex thus may admit multiple local minimizers, even for linear forward maps. Sufficient conditions for convexity are stated and proved in Subsection \ref{ssec:convexity}. The practical efficiency of $\ell_p$-IEKF and $\ell_p$-IEKF-SL for $r \in (0,1)$ is demonstrated in Section \ref{sec: experiments}. 

 \begin{remark}
In contrast to IEKF, $\ell_p$-IEKF partially preserves the initial subspace property: the output of $\ell_p$-IEKF lies in the span of the initial ensemble of the last $u$ update step, rather than the span of the initial ensemble of the first $u$ update step. Outer regularization iterations can be viewed as a way to adaptively modify the prior covariance for the $u$ update to reflect the sparse structure of the true parameter. As each outer iteration modifies the prior covariance, the initial subspace for each $u$ update step will change accordingly.  
\end{remark}

 \begin{remark}
For the $u$ update step, one may consider using other ensemble Kalman methods such as \textit{Ensemble Kalman Inversion} (EKI) \cite{iglesias2013ensemble} or \textit{Tikhonov Ensemble Kalman Inversion} (TEKI) \cite{chada2019tikhonov}. In contrast to IEKF and IEKF-SL, for linear forward maps the ensembles obtained using EKI and TEKI collapse to a single point in the long time limit \cite{chada2020iterative}. Consequently, EKI and TEKI do not provide uncertainty quantification, even in linear or mildly nonlinear settings, unless suitably stopped. The algorithm IEKF-SL was designed so that in linear settings the empirical covariance of the ensemble approximates the true posterior covariance in the long time asymptotic \cite{chada2020iterative}. 
\end{remark}

\begin{remark}
To determine when to terminate the outer iteration for $\ell_p$-IEKF and $\ell_p$-IEKF-SL, one may monitor the relative change of iterates. For instance, one may set a small tolerance $\tau > 0$ and terminate if
$$
\frac{\|u^{\ell+1} - u^\ell \|_{\infty}}{\|u^\ell\|_\infty} < \tau. 
$$ 
\end{remark}

\begin{remark}
Although we have focused on a particular choice of hyperparameter values, namely $r\beta = \frac{3}{2}$, the general framework extends beyond this choice  to $r \beta > \frac{3}{2} + \delta$ for arbitrary $\delta$, and to alternative $r$. In those cases, the update function which maximizes $\theta$ given $u$ is expressed implicitly as the solution to an initial value problem which can be easily solved to update $\theta$ \cite{calvetti2020sparse}. One can also consider $r = -1$, which produces effective penalty terms corresponding to prior distributions whose tails decay as power laws in $u$.  For $r= -1$  ---which corresponds to imposing an inverse gamma prior--- the objective function for the $\theta_i$ update is given by
\begin{equation*}\label{sec3:student_update}
\theta_i = \frac{1}{\beta + \frac{3}{2}}\left(\frac{u_i^2}{2} + \frac{1}{\vartheta_i} \right).
\end{equation*}
Assuming $\vartheta_i = 1$ for all $i = 1, \ldots, d$ with $\kappa = \beta + \frac32$, the prior component for $u$ is given by 
$$
\pi(u) \propto ~\text{exp}\left(-\sum_{i=1}^d \log(u_i^2+2)^\kappa\right) = \prod_{i=1}^d \frac{1}{(u_i^2 + 2)^\kappa}.
$$
In the limit as $\beta \to 0$, i.e. $\kappa \to \frac{3}{2}$, the prior distribution for each component $u_i$ approaches the Student distribution with two degrees of freedom, a heavy-tailed distribution which favors outliers. 
\end{remark}

\subsection{Convexity}\label{ssec:convexity}
Proposition \ref{sec4:convexity_prop} below gives conditions on the forward map $\G$ and the parameters $(r, \beta)$ that ensure convexity of the objective function in \eqref{sec2:objective}. For convex objectives, our methodology can be viewed as an ensemble approximation of a coordinate descent scheme that is globally convergent under mild assumptions \cite{tseng2001convergence}.

\begin{proposition}\label{sec4:convexity_prop}
Let $\beta, r > 0.$ The following holds:
\begin{enumerate}
    \item If $r \ge 1$  or $r \le 0$, $\eta = r\beta - \frac{3}{2} \ge 0$ and $\sum_{i=1}^n \mathcal{G}_i(u) \nabla^2 \mathcal{G}_i(u) \succcurlyeq 0$, then the objective function $\J(u, \theta)$ in \eqref{sec2:objective} is convex everywhere.
    \item If $0 < r < 1$, $\eta = r\beta - \frac{3}{2}  \ge 0$ and $\sum_{i=1}^n \mathcal{G}_i(u) \nabla^2 \mathcal{G}_i(u) \succcurlyeq 0$, then the objective function $\J(u, \theta)$ in \eqref{sec2:objective} is convex provided that, for all  $i \in \{1, \ldots, d\},$
    \begin{equation}\label{convex_condition}
    \frac{\theta_i}{\vartheta_i} \le \left(\frac{\eta}{r(1-r)} \right)^{\frac{1}{r}}.
    \end{equation}
\end{enumerate}
\end{proposition}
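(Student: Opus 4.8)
The plan is to exploit the additive, block-separable structure of the objective. Writing $\eta = r\beta - \tfrac32$ and recalling that $\|u\|_{D_\theta}^2 = \sum_{i=1}^d u_i^2/\theta_i$, I would decompose
\[
\J(u,\theta) = \underbrace{\tfrac12\|y-\G(u)\|_\Gamma^2}_{f(u)} + \sum_{i=1}^d h_i(u_i,\theta_i), \qquad h_i(u_i,\theta_i) := \frac{u_i^2}{2\theta_i} - \eta\,\log\frac{\theta_i}{\vartheta_i} + \frac{\theta_i^r}{\vartheta_i}.
\]
Because $f$ depends on $u$ alone and each $h_i$ couples only the single pair $(u_i,\theta_i)$, joint convexity of $\J$ on $\R^d\times\R_{>0}^d$ reduces to two independent facts: (i) convexity of $f$ in $u$, and (ii) convexity of each $h_i$ on $\R\times\R_{>0}$. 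Indeed, a function that is convex in a subset of the coordinates is convex in all of them, and a finite sum of convex functions is convex.

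For (i), I would differentiate twice to obtain $\nabla^2 f(u) = \G'(u)^\top\Gamma^{-1}\G'(u) - \sum_{j}[\Gamma^{-1}(y-\G(u))]_j\,\nabla^2\G_j(u)$. The Gauss--Newton term $\G'^\top\Gamma^{-1}\G'$ is automatically positive semidefinite, while the residual-weighted second-derivative term is exactly what the structural hypothesis $\sum_i \G_i(u)\,\nabla^2\G_i(u)\succcurlyeq 0$ is there to absorb; thus $f$ is convex under that assumption in both cases of the proposition. This step is common to the two cases and carries no dependence on $(r,\beta)$.

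The heart of the argument is (ii), which is where the $(r,\beta)$ conditions enter. I would compute the $2\times2$ Hessian $H_i$ of $h_i$ in $(u_i,\theta_i)$, getting $\partial_{u_iu_i}h_i = 1/\theta_i$, $\partial_{u_i\theta_i}h_i = -u_i/\theta_i^2$, and $\partial_{\theta_i\theta_i}h_i = u_i^2/\theta_i^3 + \eta/\theta_i^2 + r(r-1)\theta_i^{r-2}/\vartheta_i$. The top-left entry is strictly positive since $\theta_i>0$, so by the standard criterion a symmetric $2\times2$ matrix is positive semidefinite once its determinant is also nonnegative (the remaining diagonal entry is then forced nonnegative). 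The pleasant cancellation is that the two $u_i^2/\theta_i^4$ contributions in $\partial_{u_iu_i}h_i\,\partial_{\theta_i\theta_i}h_i$ and $(\partial_{u_i\theta_i}h_i)^2$ cancel, leaving the $u_i$-free expression
\[
\det H_i = \frac{1}{\theta_i^3}\left(\eta + \frac{r(r-1)\,\theta_i^r}{\vartheta_i}\right),
\]
so convexity of $h_i$ is equivalent to $\eta + r(r-1)\theta_i^r/\vartheta_i \ge 0$.

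Finally I would perform the sign analysis, which splits precisely along the sign of $r(r-1)$. When $r\ge1$ or $r\le0$ we have $r(r-1)\ge0$, so with $\eta\ge0$ and $\theta_i,\vartheta_i>0$ the determinant is nonnegative for every $\theta_i$; hence $H_i\succcurlyeq0$ everywhere and, combined with (i), the objective is convex globally, giving case 1. When $0<r<1$ we have $r(r-1)<0$, and nonnegativity of the determinant rearranges to $\eta \ge r(1-r)\,\theta_i^r/\vartheta_i$, that is, the threshold condition \eqref{convex_condition}. The argument is mostly a matter of isolating the block/separable structure cleanly; once that is done the determinant cancellation does the work, and the only genuinely delicate point is justifying convexity of the data-misfit term $f$ from the stated hypothesis on $\G$, since its generic Hessian carries the residual-weighted term that the hypothesis must control.
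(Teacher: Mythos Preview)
Your proof follows the same route as the paper's: compute the Hessian and reduce positive semidefiniteness to the scalar inequality governing the sign of $\eta + r(r-1)\theta_i^{r}/\vartheta_i$; the paper does this by writing out $q^\top H q$ for the full $(u,\theta)$-Hessian and completing the square, while you exploit the block-separable decomposition $\J=f(u)+\sum_i h_i(u_i,\theta_i)$ and check the $2\times2$ determinants, which is the same calculation repackaged. The paper treats the data-misfit block exactly as you anticipate, recording $\nabla_u\nabla_u\J=\nabla\G\nabla\G^\top+\sum_i\G_i(u)\nabla^2\G_i(u)+D_\theta^{-1}$ and invoking the hypothesis directly, so the delicate point you flag is handled identically there.
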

\begin{proof}
The Hessian of $\J(u,\theta)$ is given by
$$
H = \nabla^2 \J(u,\theta) = \begin{bmatrix}
\nabla_u \nabla_u \J(u,\theta) & \nabla_\theta \nabla_u \J(u,\theta) \\
\nabla_u \nabla_\theta \J(u,\theta) & \nabla_\theta \nabla_\theta \J(u,\theta)
\end{bmatrix},
$$
where
\begin{align*}
    \nabla_u \nabla_u \J(u,\theta) &= \nabla \mathcal{G}\nabla \mathcal{G}^\top + \sum_{i=1}^n \mathcal{G}_i(u) \nabla^2 \mathcal{G}_i(u) + D_{\theta}^{-1}, \\
    \nabla_u \nabla_\theta \J(u,\theta) &= \nabla_\theta \nabla_u \J(u,\theta) = \text{diag}\left(-\frac{u_i}{\theta_i^2}\right), \\
    \nabla_\theta \nabla_\theta \J(u,\theta) &= \text{diag}\left(\frac{u_i^2}{\theta_i^3} + \frac{r(r-1)}{\vartheta_i^2}\left(\frac{\theta_i}{\vartheta_i}\right)^{r-2} + \frac{\eta}{\theta_i^2}\right).
\end{align*}
For any vector $q = \begin{bmatrix} v \\ w \end{bmatrix}$, we have
\begin{alignat*}{2}
 q^\top Hq &= \|\nabla \mathcal{G}^\top v\|^2 && \hspace{-4cm}+ \sum_{i=1}^n \mathcal{G}_i(u) (v^\top \nabla^2 \mathcal{G}_i(u) v) \\
&  && \hspace{-4cm}+ \sum_{i=1}^d \frac{v_i^2}{\theta_i} + \sum_{i=1}^d \left(\frac{u_i^2}{\theta_i^3} + \frac{r(r-1)}{\vartheta_i^2}\left(\frac{\theta_i}{\vartheta_i}\right)^{r-2} + \frac{\eta}{\theta_i^2} \right)w_i^2 -2\sum_{i=1}^d \frac{u_i}{\theta_i^2}v_iw_i   \\
 &= \|\nabla \mathcal{G}^\top v\|^2 + \sum_{i=1}^n \mathcal{G}_i(u) (v^\top \nabla^2 \mathcal{G}_i(u) v) \\ 
 & && \hspace{-4cm}+ \sum_{i=1}^d \frac{1}{\theta_i}\left(v_i - \frac{u_iw_i}{\theta_i} \right)^2 + \sum_{i=1}^d \left(\frac{r(r-1)}{\vartheta_i^2}\left(\frac{\theta_i}{\vartheta_i} \right)^{r-2} + \frac{\eta}{\theta_i^2} \right)w_i^2.
\end{alignat*}
From the assumption, the first three terms are always non-negative and the remaining term is non-negative if, for all  $i \in \{1, \ldots, d\},$
$$
\frac{r(r-1)}{\vartheta_i^2}\left(\frac{\theta_i}{\vartheta_i} \right)^{r-2} + \frac{\eta}{\theta_i^2} \ge 0,
$$
which implies the conditions of the two different cases.
\end{proof}

\section{Numerical Experiments} \label{sec: experiments} 
In this section we demonstrate the effectiveness of the proposed methodology in three examples: 1) underdetermined linear inverse problem; 2) nonlinear inverse problem with an explicit forward map that gives a convex objective for certain hyperparameter values; and 3) nonlinear elliptic inverse problem. For all three examples we assumed that only a few components of the unknown $u$ are nonzero and compared $\ell_{0.5}$/$\ell_1$-IEKF/IEKF-SL with the vanilla IEKF/IEKF-SL. Throughout, the step-size of ensemble Kalman methods is set to be $ \alpha = 0.5$ and the scale parameters are set to be $\vartheta_i = 1$ for all $1\le i \le d.$ These choices suffice to illustrate the successful regularization achieved by our method when compared to vanilla ensemble Kalman methods with $\ell_2$-regularization.
In order to clearly compare different algorithms and regularization techniques, we report the evolution of the corresponding ensembles instead of using a stopping criterion.

\subsection{Linear Inverse Problem}
Our first example explores the performance of our methods in two tasks: point estimation and uncertainty quantification. For the point estimation task, we provide comparisons with the \emph{iterative alternating scheme} (IAS) \cite{calvetti2019hierachical}, \emph{least absolute shrinkage and selection operator} (LASSO) \cite{tibshirani1996regression}, and \emph{Tikhonov ensemble Kalman inversion} (TEKI) \cite{chada2019tikhonov}. For the uncertainty quantification task, we compare credible intervals constructed using the empirical distribution of the ensembles produced by our algorithm with credible intervals constructed using the \emph{variational iterative alternating scheme} (VIAS) \cite{agrawal2021variational}. 
\subsubsection{Setting}
Consider the linear inverse problem 
$$
y = Gu + \eps, \quad \quad \eps \sim \Nc(0, 0.01I_{30}),
$$
where each component of $G \in \mathbb{R}^{30\times 300}$ is independently sampled from the standard Gaussian distribution. We assume that the true parameter $u \in \mathbb{R}^{300}$ has four nonzero components. Our goal is to recover $u$ from $y \in \mathbb{R}^{30}$. 

\begin{figure}[H]
\centering
\includegraphics[height = 0.275\textwidth]{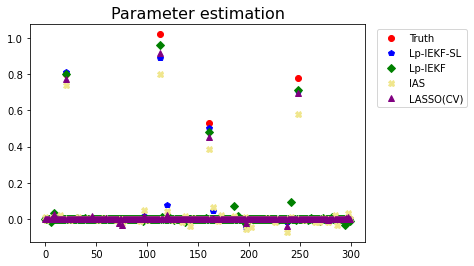}
\includegraphics[height = 0.275\textwidth]{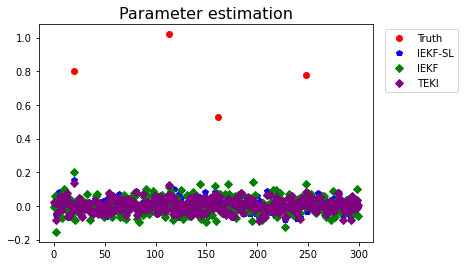}
\includegraphics[height = 0.3\textwidth, width = \textwidth]{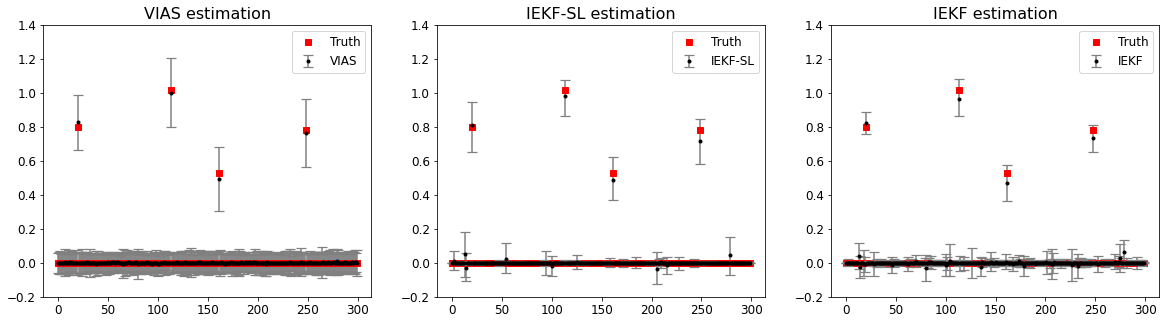}
\caption{Parameter estimation and uncertainty quantification in linear example.  Top row: parameter estimation. Bottom row: uncertainty quantification via approximate credible intervals.}
\label{EXAMPLE1_1}
\end{figure}

\subsubsection{Numerical Results}
For point estimation, we compare results obtained with $\ell_{0.5}$-IEKF, $\ell_{0.5}$-IEKF-SL, IAS, and LASSO. For both $\ell_{0.5}$-IEKF and $\ell_{0.5}$-IEKF-SL, a total of $N=300$ ensemble members were used with $T=30$ inner iterations and 10 outer iterations. The particles $u_0^{(n)}$ of the initial ensemble were independently sampled from a centered Gaussian with covariance $0.1 I_{300}$. The sparsity parameter of IAS, denoted by $\eta$ in \cite{calvetti2019hierachical}, was set to be zero. For LASSO, the regularization coefficient was chosen based on 10-fold cross-validation.

In addition to obtaining a point estimate for the parameter of interest, one can build approximate credible intervals based on the ensemble members employed in IEKF/IEKF-SL. Like in Markov chain Monte Carlo, after enough iterations ensemble members serve as good proxies for samples from the true posterior distribution in linear or mildly nonlinear settings. From these ensemble members one can obtain 2.5/97.5th sample percentiles to construct approximate 95 percentile credible intervals for each component of the parameter. We provide such approximate credible intervals for $\ell_{0.5}$-IEKF/IEKF-SL with $N=300$ ensemble members. A total of eight outer iterations and $T=30$ inner IEKF/IEKF-SL iterations were run. In order to demonstrate the effectiveness of these approximate credible intervals, we also present approximate credible intervals based on VIAS \cite{agrawal2021variational} in Figure \ref{EXAMPLE1_1}. The parameters of VIAS, were set to be $b = 0.1$ and $s = -0.495,$ with a total of 40 iterations. 

From Figure \ref{EXAMPLE1_1}, we can observe the effectiveness of our regularized methods in estimating the true parameter. The proposed $\ell_{0.5}$-IEKF/IEKF-SL clearly outperforms the vanilla IEKF/IEKF-SL in the estimation task. It is shown to be competitive with LASSO and IAS, which are only applicable in linear settings. Although the convexity of \eqref{sec2:objective} is not guaranteed for $p = 0.5$, the numerical result in Figure \ref{EXAMPLE1_1} demonstrates successful regularization. In terms of uncertainty quantification, $\ell_{0.5}$-IEKF/IEKF-SL showed comparable performance in constructing approximate credible intervals to the recently proposed VIAS.
These results clearly show that $\ell_{0.5}$-IEKF/IEKF-SL can preserve desirable properties of ensemble-based derivative-free optimization methods and IAS. To further investigate the convergence and the regularization effect of the parameter $r$, we also conducted extensive simulations using a less severely underdetermined system with $y\in \R^{100}$ to avoid possible instabilities caused by a small number of observations. To evaluate the regularization effect, we varied $r$ values ranging from 0.1 to 2 and computed the $\ell_2$-norm of the $\ell_1$-IEKF/IEKF-SL estimates whose indices correspond to entries off the support of the true signal. 

\begin{figure}[H]
\centering
\includegraphics[height = 0.25\textwidth]{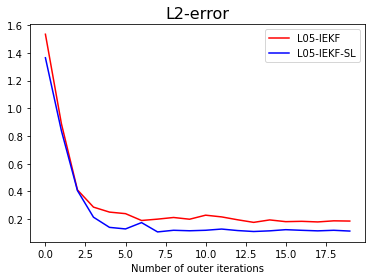}
\includegraphics[height = 0.25\textwidth]{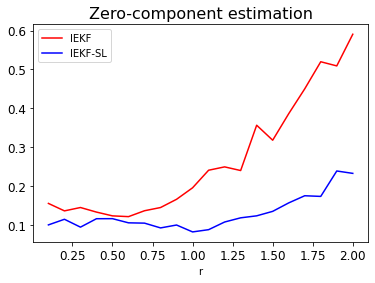}
\caption{Left: $\ell_2$-convergence comparison. Right: regularization effect of $r$. }
\label{EXAMPLE1_2}
\end{figure}
The left plot in Figure \ref{EXAMPLE1_2} shows the $\ell_2$ norm of the error $|u^\ell - u^*|$ between the true parameter $u^*$ and the estimates obtained by $\ell_{1}$-IEKF/IEKF-SL over 30 outer iterations. From the plot, $\ell_{1}$-IEKF-SL is more accurate than $\ell_{1}$-IEKF. The right plot in Figure \ref{EXAMPLE1_2} clearly demonstrates stronger regularization at small $r$ values, as expected.

\subsection{Nonlinear Inverse Problem with Explicit Forward Map}\label{ssec:nonlinearexplicitexample}
Here we study a nonlinear inverse problem introduced in \cite{kabanikhin2008definitions}, which has a closed-form forward map that satisfies the condition in Proposition \ref{sec4:convexity_prop}. Our results demonstrate that strongly sparsity-promoting regularization techniques, for which convexity of the objective is lost, provide more accurate reconstruction.
\subsubsection{Setting}
Consider the following first-order partial differential equation (PDE)
\begin{align}\label{eq:firstorederPDE}
\begin{split}
    \partial_{x_1} v - \partial_{x_2} v - u(x_1)v &= 0, \quad \quad \quad  (x_1,x_2) \in (0,1) \times (0,1), \\
    v(x_1,0) &= \phi(x_1), \quad\quad\quad \,\,  x_1 \in [0,1].
\end{split}
\end{align}
If $u$ is continuous and $\phi$ is continuously differentiable, then \eqref{eq:firstorederPDE} admits the solution 
$$
v(x_1,x_2) = \phi(x_1+x_2) \, \text{exp}\left(\int_{x_1+x_2}^{x_1} u(z)dz \right), \quad \quad (x_1,x_2) \in [0,1] \times [0,1].
$$
For a given $\phi$, the data $y$ is sampled according to
$$
y(x_1, x_2) = v(x_1, x_2) + \eps, \quad \eps \sim \Nc(0, 0.1^2), \quad (x_1,x_2) \in [0,1] \times [0,1].
$$
The domain of interest, $[0,1] \times [0,1]$, was discretized with a $21 \times 21$ uniform grid. The solution of the PDE was observed on the grid. Our goal is to recover the function $u$ given the data $y \in \mathbb{R}^{21 \times 21}$. We further assume that $u$ admits a representation
$$
u(x) = \sum_{j=1}^{30}u_j \sin(j\pi x) +  \sum_{j=1}^{30}\tilde u_j \cos(j\pi x), ~x \in [0,1],
$$
which reduces the function recovery problem to a parameter estimation problem. Therefore, for each $(x_1, x_2) \in [0,1] \times [0,1]$, the forward map is given by
$$
\mathcal{G}:(u_j, \tilde u_j)_{j=1}^{30} \in \mathbb{R}^{60} \mapsto v \in \mathbb{R}^{441},
$$ 
which is convex/concave if the function $\phi$ is positive/negative. Hence, from the proposition \eqref{sec4:convexity_prop}, the objective function given in \eqref{sec2:objective} is convex for $r \ge 1$. The following simulations illustrate the effectiveness of the proposed methodology for $r = \frac{1}{3}$ ---which amounts to imposing an $\ell_{0.5}$-regularization--- despite the possible loss of convexity for $r \in (0,1)$. For the numerical simulation, we set $\phi(x) = \cos(x)$ and  assumed that only three components of each $u_j$ and $\tilde u_j$ are nonzero. Specifically, we set
$$
u(x) = 1.2\bigl(\sin(\pi x) + \sin(3\pi x) - \sin(6\pi x) - \cos(3\pi x)\bigr) - 0.6\bigl(\cos(\pi x) - \cos(6\pi x)\bigr).
$$

\subsubsection{Numerical Results}
We compared the performance of IEKF, IEKF-SL, $\ell_{0.5}$-IEKF, $\ell_{0.5}$-IEKF-SL, $\ell_1$-IEKF, and $\ell_1$-IEKF-SL in terms of their accuracy. For all of our methods, we used $N=100$ ensemble members with three outer regularization iterations and $T=20$ inner iterations of ensemble Kalman methods. The initial ensemble members $u_0^{(n)}$ were sampled from a centered Gaussian with covariance matrix $0.04 I_{60}$. We compare the reconstructions to the true target function $u$.

In both Figures \ref{EXAMPLE2_1} and \ref{EXAMPLE2_2}, the top and bottom rows correspond to $\ell_{1}$ and $\ell_{0.5}$-regularization,  respectively. The blue curves represent our function recovery based on the corresponding ensemble Kalman method. The shaded regions represent elementwise 2.5/97.5 percentile values of the recovery results. The recovery improves with additional outer iterations. As in the linear example, $\ell_{0.5}$-regularization worked as effectively as, or better than, $\ell_{1}$-regularization. In particular, $\ell_{0.5}$-IEKF-SL recovered  the true function almost perfectly after three outer iterations. 

\begin{minipage}{0.45\textwidth}
\begin{table}[H]
\begin{center}
\hspace{-1cm}
\begin{tabular}{ |c|c|c|c|c| } 
\hline
 & \multicolumn{3}{|c|}{$\#$ of outer iteration} \\
\hline
Method & 0th & 1st & 3rd \\
\hline
$\ell_{1}$-IEKF & 0.238 & 0.168 & 0.094  \\
$\ell_{0.5}$-IEKF & 0.238 & 0.148 & 0.057 \\ 
$\ell_{1}$-IEKF-SL & 0.205 & 0.141 & 0.067  \\
$\ell_{0.5}$-IEKF-SL & 0.205 & 0.132 & 0.037  \\
\hline
\end{tabular}
\end{center}
\captionsetup{width=.8\linewidth}\caption{$\ell_2$-error between parameter estimate and true value.}
\label{table22}
\end{table}
\end{minipage}
\begin{minipage}{0.45\textwidth}
\begin{table}[H]
\begin{center}
\hspace{-0.4cm}
\begin{tabular}{ |c|c|c|c|c| } 
\hline
 & \multicolumn{3}{|c|}{$\#$ of outer iteration} \\
\hline
Method & 0th & 1st & 3rd \\
\hline
$\ell_{1}$-IEKF & 3.678 & 5.371 & 4.228  \\
$\ell_{0.5}$-IEKF & 3.678 & 4.945 & 2.959  \\ 
$\ell_{1}$-IEKF-SL & 4.660 & 3.649 & 2.898  \\
$\ell_{0.5}$-IEKF-SL & 4.660 & 3.060 & 2.400 \\
\hline
\end{tabular}
\end{center}
\captionsetup{width=.8\linewidth}\caption{Average width of the credible intervals for recovery.}
\label{table21}
\end{table}
\end{minipage}

Table \ref{table22} shows the $\ell_2$-norm error between the parameter estimate and the truth. The results demonstrate the effectiveness of sparsity-promoting regularization. In all four methods, the $\ell_2$-error decreased with additional outer iterations. In both IEKF and IEKF-SL, $\ell_{0.5}$-regularization produced more accurate ($\ell_2$) recovery than $\ell_{1}$-regularization.

Table \ref{table21} contains the average widths of the credible intervals along the number of outer iterations. The widths of the credible intervals tend to decrease as more outer iterations are performed. This is expected since, as regularization effects accumulate, ensembles are more likely to center about their mean and credible bands around the target function become narrower. We can also see the stronger regularizing effect by comparing the width of the credible intervals corresponding to $\ell_{0.5}$-regularization and that of $\ell_{1}$-regularization.

\begin{figure}
\centering
\includegraphics[height=.25 \textwidth, width = \textwidth]{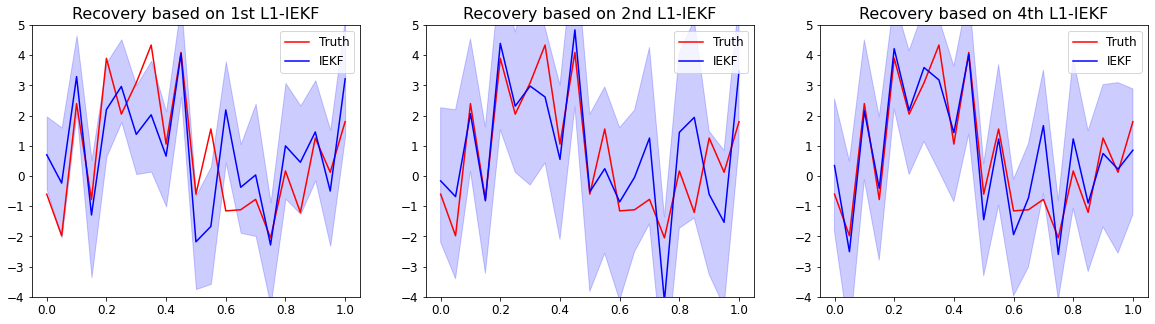}
\includegraphics[height=.25 \textwidth, width = \textwidth]{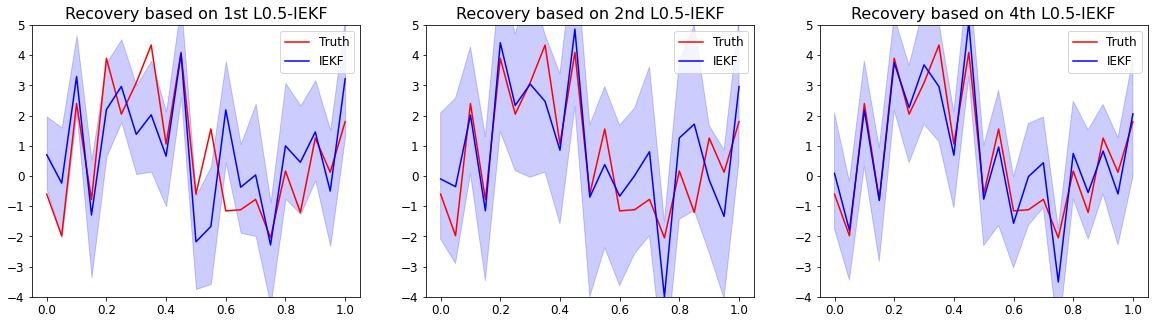}
\caption{Example in Subsection \ref{ssec:nonlinearexplicitexample}. Red: target function to recover. Blue: $\ell_p$-IEKF recovery. Top row: $\ell_{1}$-IEKF. Bottom row: $\ell_{0.5}$-IEKF. Left column: vanilla IEKF. Middle column: $\ell_p$-IEKF after one outer iteration. Right column: $\ell_p$-IEKF after three outer iterations. Shaded: 2.5/97.5 percentile of the recovery.}
\label{EXAMPLE2_1}
\end{figure}
\begin{figure}
\centering
\includegraphics[height=.25 \textwidth, width = \textwidth]{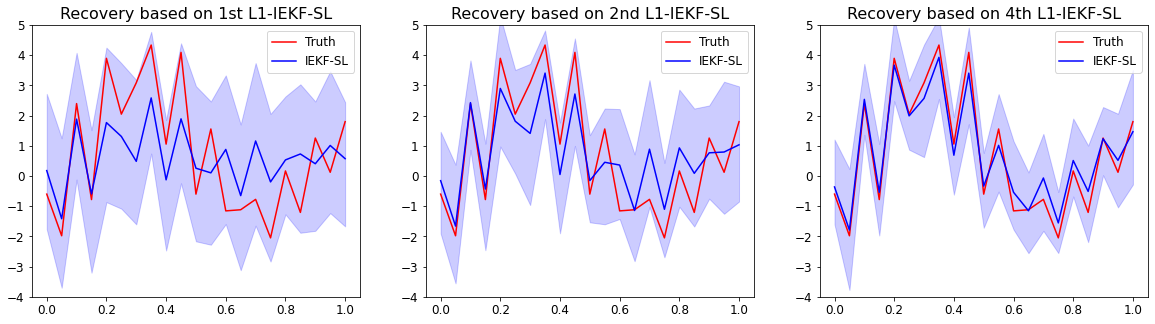}
\includegraphics[height=.25 \textwidth,width = \textwidth]{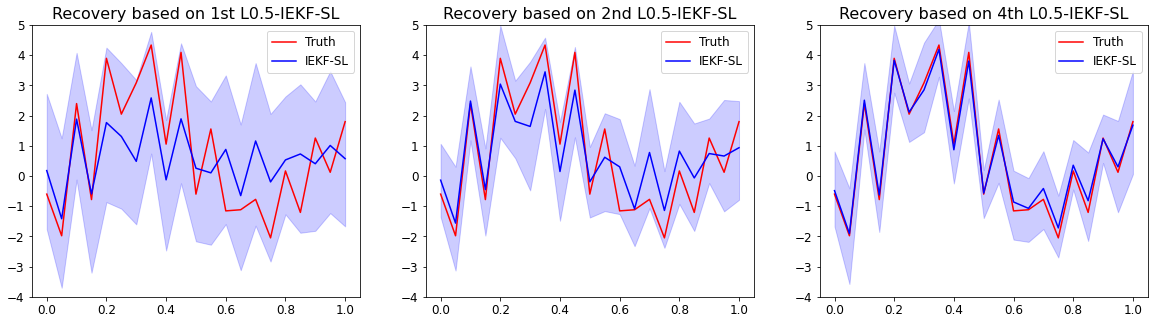}
\caption{Example in Subsection \ref{ssec:nonlinearexplicitexample}. Red: target function to recover. Blue: $\ell_p$-IEKF-SL recovery. Top row: $\ell_{1}$-IEKF-SL. Bottom row: $\ell_{0.5}$-IEKF-SL. Left column: vanilla IEKF-SL. Middle column: $\ell_p$-IEKF-SL after one outer iteration. Right column: $\ell_p$-IEKF-SL after three outer iterations. Shaded: 2.5/97.5 percentile of the recovery.}
\label{EXAMPLE2_2}
\end{figure}


\subsection{2D-Elliptic Inverse Problem}\label{ssec:ellipticexample}
Finally, following \cite{lee2021lp} we consider a two-dimensional elliptic inverse problem under a sparsity assumption. We show that our methodology can achieve accurate reconstructions with $\ell_p$-regularization with $p=0.5,$ while the approach in  \cite{lee2021lp} could not be implemented with $p<0.7.$
\subsubsection{Setting}
Consider the elliptic PDE
$$
-\text{div} \bigl(e^{u(x)} \nabla v(x)\bigr) = f(x), ~x = (x_1, x_2) \in [0,1] \times [0,1]
$$
with boundary conditions
$$
v(x_1, 0) = 100, ~\frac{\partial v}{\partial x_1}(1, x_2) = 0,~ -e^{u(x)}\frac{\partial v}{\partial x_1}(0, x_2) = 500, ~
\frac{\partial v}{\partial x_2}(x_1, 1) = 0,
$$
and source term
$$
f(x) = f(x_1, x_2) = 
\begin{cases}
0 & \quad 0 \le x_2 \le \frac{4}{6},\\
137 & \quad \frac{4}{6} < x_2 \le \frac{5}{6},\\
274 & \quad \frac{5}{6} < x_2 \le 1.
\end{cases}
$$
The equation is discretized in a uniform $15 \times 15$ grid in $[0,1] \times [0,1]$. Following \cite{lee2021lp}, we assumed that the log diffusion coefficient can be expressed as 
$$
u(x_1, x_2) = \sum_{i=0}^{19}\sum_{j=0}^{19} u_{ij} \phi_{ij}(x_1, x_2),
$$
where $\phi_{ij}(x_1, x_2) = \cos(i\pi x_1)\cos(j \pi x_2)$. Using the first boundary condition, the forward map is given by
$$
\mathcal{G}: \{u_{i,j}\}_{i,j=0}^{19} \in \mathbb{R}^{400} \mapsto v \in \mathbb{R}^{14 \times 15},
$$
which we implemented using the five-point stencil finite-difference method. To ensure sparsity, only six of the 400 components of $\{u_{i,j}\}_{i,j=0}^{19}$ were chosen to be nonzero. We aim to recover $\{u_{i,j}\}_{i,j=0}^{19} \in \mathbb{R}^{400}$ from the data
$$
y(x_1, x_2) = \mathcal{G}\bigl(u(x_1, x_2)\bigr) + \eps = v(x_1, x_2) + \eps, ~\eps \sim \Nc(0, 0.1^2).
$$

\subsubsection{Numerical Results}
We compare the performance of IEKF, IEKF-SL, $\ell_{0.5}$-IEKF, $\ell_{0.5}$-IEKF-SL, $\ell_1$-IEKF, and $\ell_1$-IEKF-SL in terms of their parameter estimation accuracy. For all our methods, we used $N=400$ ensemble members with six outer regularization iterations and $T=30$ inner iterations. The initial ensemble $u_0^{(n)}$ was sampled from a centered Gaussian with covariance matrix $0.1 I_{400}$. We present comparisons to the true parameters $\{u_{i,j}\}_{i,j=0}^{19}.$

In Figures \ref{EXAMPLE3_1} and \ref{EXAMPLE3_2}, the top and bottom rows respectively correspond to $\ell_{1}$ and $\ell_{0.5}$-regularization. We also provide approximate credible intervals, constructed from elementwise 2.5/97.5 percentile values of the empirical distribution of the ensemble. In both figures, we observe that estimates improve with more outer iterations. In addition, $\ell_{0.5}$-regularization acted more strongly off the support than $\ell_1$-regularization. Within three outer regularization iterations, both $\ell_{0.5}$-IEKF/IEKF-SL yielded parameter estimates very close to the true value. 

As in the previous numerical example, Table \ref{table32} provides the $\ell_2$-norm errors. In all cases, errors decreased with the additional outer iterations. Note that $\ell_{0.5}$-regularization is an order of magnitude more accurate than $\ell_{1}$-regularization whether using IEKF or IEKF-SL. Table \ref{table31} shows that the length of approximate credible intervals decreased with the number of outer iterations. After only six iterations, credible intervals produced by $\ell_{0.5}$-regularization are an order of magnitude or more smaller than those produced by $\ell_{1}$-regularization.

\begin{minipage}{0.45\textwidth}
\begin{table}[H]
\begin{center}
\hspace{-1.2cm}
\begin{tabular}{ |c|c|c|c|c| } 
\hline
 & \multicolumn{3}{|c|}{$\#$ of outer iteration} \\
\hline
Method & 0th & 3rd & 6th \\
\hline
$\ell_{1}$-IEKF & 0.030 & 0.014 & 0.012  \\
$\ell_{0.5}$-IEKF & 0.030 & 0.004 & 0.002 \\ 
$\ell_{1}$-IEKF-SL & 0.030 & 0.020 & 0.014  \\
$\ell_{0.5}$-IEKF-SL & 0.030 & 0.008 & 0.007  \\
\hline
\end{tabular}
\end{center}
\captionsetup{width=.8\linewidth}\caption{$\ell_2$-error between parameter estimate and true value.}
\label{table32}
\end{table}
\end{minipage}
\begin{minipage}{0.45\textwidth}
\begin{table}[H]
\begin{center}
\hspace{-0.4cm}
\begin{tabular}{ |c|c|c|c|c| } 
\hline
 & \multicolumn{3}{|c|}{$\#$ of outer iteration} \\
\hline
Method & 0th & 3rd & 6th \\
\hline
$\ell_{1}$-IEKF & 1.211 & 0.198 & 0.211  \\
$\ell_{0.5}$-IEKF & 1.211 & 0.055 & 0.017  \\ 
$\ell_{1}$-IEKF-SL & 1.399 & 0.259 & 0.169  \\
$\ell_{0.5}$-IEKF-SL & 1.399 & 0.041 & 0.007  \\
\hline
\end{tabular}
\end{center}
\captionsetup{width=.8\linewidth}\caption{Average width of the credible intervals for recovery.}
\label{table31}
\end{table}
\end{minipage}



\begin{figure}
\centering
\includegraphics[height=.275\textwidth]{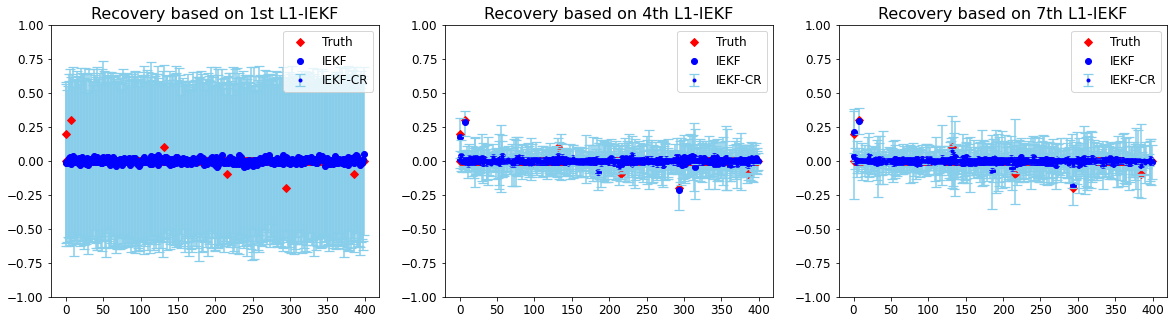}
\includegraphics[height=.275\textwidth]{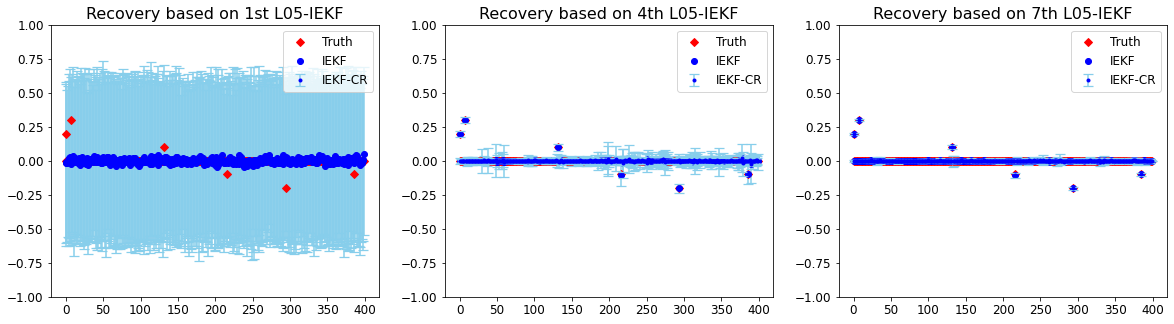}
\caption{Parameter recovery for 2D-elliptic inverse problem based on $\ell_{1}$/$\ell_{0.5}$-IEKF. Red: Truth. Blue: $\ell_p$-IEKF estimate. Left column: vanilla IEKF. Middle column: $\ell_p$-IEKF after three outer iterations. Right column: $\ell_p$-IEKF after six outer iterations. Shaded: elementwise 2.5/97.5 percentile for parameter estimate.}
\label{EXAMPLE3_1}
\end{figure}

\begin{figure}
\centering
\includegraphics[height=.275\textwidth]{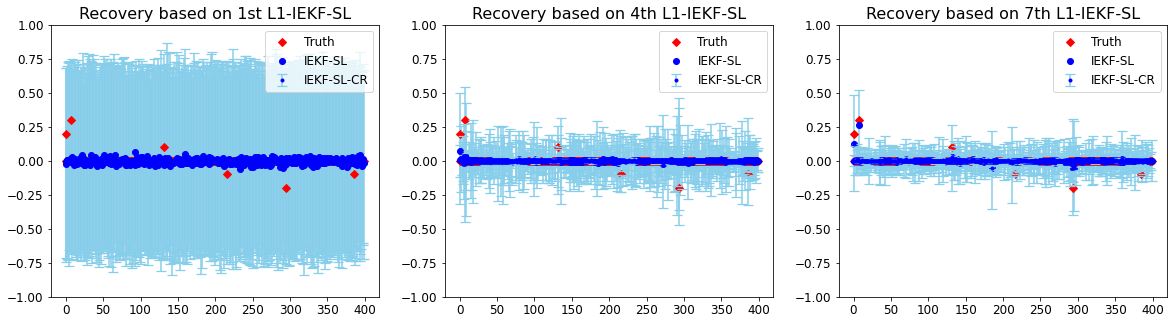}
\includegraphics[height=.275\textwidth]{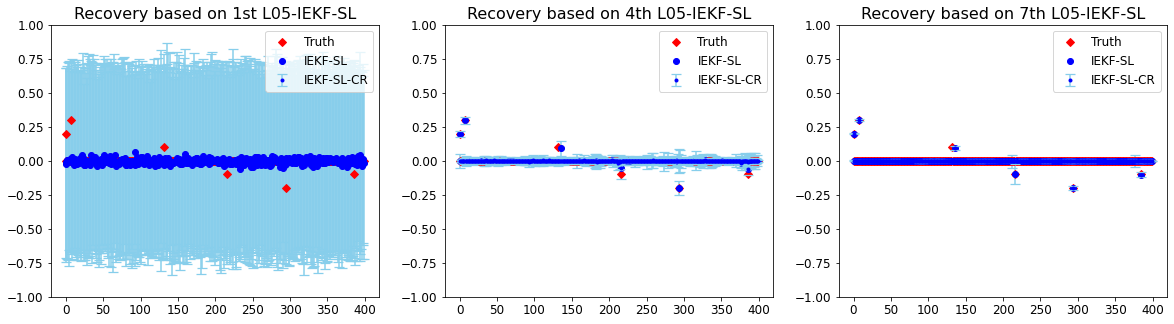}
\caption{Parameter recovery for 2D-elliptic inverse problem based on $\ell_{1}$/$\ell_{0.5}$-IEKF-SL. Red: Truth. Blue: $\ell_p$-IEKF-SL estimate. Left column: vanilla IEKF-SL. Middle column: $\ell_p$-IEKF-SL after three outer iterations. Right column: $\ell_p$-IEKF-SL after six outer iterations. Shaded: elementwise 2.5/97.5 percentile for parameter estimate.}
\label{EXAMPLE3_2}
\end{figure}

\section{Conclusion}\label{sec:Conclusions}
This paper introduced a flexible computational framework to incorporate a wide range of regularization techniques in ensemble Kalman methods. We have adopted a hierarchical Bayesian perspective to derive our methodology and shown that suitable choices of hyperparameters yield sparsity-promoting regularization. The effectiveness of our procedure was demonstrated in three numerical examples. While we have focused on  sparsity-promoting $\ell_p$-penalties, our framework extends beyond sparse models. In particular, heavy-tailed Student prior regularization and relaxed $\ell_p$-penalties with $r \beta > 3/2$ could be considered for applications in nonlinear regression and in learning dynamical systems from time-averaged data \cite{schneider2020imposing}. Finally, this paper focused on ensemble Kalman methods for inverse problems; future work will investigate regularization of ensemble Kalman filters \cite{sanzstuarttaeb,chen2021auto} in  data assimilation.

\section*{Acknowledgments}
The authors are thankful to Y. Chen and O. Ghattas for their generous feedback on a previous version of this manuscript. 
DSA is grateful for the support of NSF DMS-2027056, DOE DE-SC0022232, and a FBBVA start-up grant. HK was partially supported by DMS-2027056.

\bibliographystyle{plain} 
\bibliography{references}

\begin{thebibliography}{10}

\bibitem{agrawal2021variational}
S.~Agrawal, H.~Kim, D.~Sanz-Alonso, and A.~Strang.
\newblock A variational inference approach to inverse problems with gamma
  hyperpriors.
\newblock {\em arXiv preprint arXiv:2111.13329}, 2021.

\bibitem{arridge2019solving}
S.~Arridge, P.~Maass, O.~{\"O}ktem, and C.~Sch{\"o}nlieb.
\newblock Solving inverse problems using data-driven models.
\newblock {\em Acta Numerica}, 28:1--174, 2019.

\bibitem{bard2013practical}
J.~F. Bard.
\newblock {\em {Practical Bilevel Optimization: Algorithms and Applications}},
  volume~30.
\newblock Springer Science \& Business Media, 2013.

\bibitem{bell1993iterated}
B.~M. Bell and F.~W. Cathey.
\newblock {The iterated Kalman filter update as a Gauss-Newton method}.
\newblock {\em IEEE Transactions on Automatic Control}, 38(2):294--297, 1993.

\bibitem{blomker2018strongly}
D.~Bl\"{o}mker, C.~Schillings, and P.~Wacker.
\newblock {A strongly convergent numerical scheme from ensemble Kalman
  inversion}.
\newblock {\em SIAM Journal on Numerical Analysis}, 56(4):2537--2562, 2018.

\bibitem{blomker2019well}
D.~Bl{\"o}mker, C.~Schillings, P.~Wacker, and S.~Weissmann.
\newblock {Well posedness and convergence analysis of the ensemble Kalman
  inversion}.
\newblock {\em Inverse Problems}, 35(8):085007, 2019.

\bibitem{calvetti2015hierarchical}
D.~Calvetti, A.~Pascarella, F.~Pitolli, E.~Somersalo, and B.~Vantaggi.
\newblock {A hierarchical Krylov--Bayes iterative inverse solver for MEG with
  physiological preconditioning}.
\newblock {\em Inverse Problems}, 31(12):125005, 2015.

\bibitem{calvetti2019brain}
D.~Calvetti, A.~Pascarella, F.~Pitolli, E.~Somersalo, and B.~Vantaggi.
\newblock {Brain activity mapping from MEG data via a hierarchical Bayesian
  algorithm with automatic depth weighting}.
\newblock {\em Brain topography}, 32(3):363--393, 2019.

\bibitem{calvetti2018bayes}
D.~Calvetti, F.~Pitolli, E.~Somersalo, and B.~Vantaggi.
\newblock {Bayes meets Krylov: Statistically inspired preconditioners for
  CGLS}.
\newblock {\em SIAM Review}, 60(2):429--461, 2018.

\bibitem{calvetti2020sparsity}
D.~Calvetti, M.~Pragliola, and E.~Somersalo.
\newblock {Sparsity promoting hybrid solvers for hierarchical Bayesian inverse
  problems}.
\newblock {\em SIAM Journal on Scientific Computing}, 42(6):A3761--A3784, 2020.

\bibitem{calvetti2020sparse}
D.~Calvetti, M.~Pragliola, E.~Somersalo, and A.~Strang.
\newblock {Sparse reconstructions from few noisy data: analysis of hierarchical
  Bayesian models with generalized gamma hyperpriors}.
\newblock {\em Inverse Problems}, 36(2):025010, 2020.

\bibitem{calvetti2019hierachical}
D.~Calvetti, E.~Somersalo, and A.~Strang.
\newblock {Hierachical Bayesian models and sparsity: $L^2$-magic}.
\newblock {\em Inverse problems}, 35(3):035003, 2019.

\bibitem{carvalho2009handling}
C.~M. Carvalho, N.~G Polson, and J.~G. Scott.
\newblock Handling sparsity via the horseshoe.
\newblock In {\em Artificial Intelligence and Statistics}, pages 73--80. PMLR,
  2009.

\bibitem{chada2020iterative}
N.~K. Chada, Y.~Chen, and D.~Sanz-Alonso.
\newblock {Iterative ensemble Kalman methods: A unified perspective with some
  new variants}.
\newblock {\em Foundations of Data Science}, 3(3):331--369, 2021.

\bibitem{chada2019tikhonov}
N.~K. Chada, A.~M. Stuart, and X.~T. Tong.
\newblock {Tikhonov regularization within ensemble Kalman inversion}.
\newblock {\em SIAM Journal on Numerical Analysis}, 58(2):1263--1294, 2020.

\bibitem{chada2019convergence}
N.~K. Chada and X.~Tong.
\newblock {Convergence acceleration of ensemble Kalman inversion in nonlinear
  settings}.
\newblock {\em Mathematics of Computation}, 91(335):1247--1280, 2022.

\bibitem{chen2021auto}
Y.~Chen, D.~Sanz-Alonso, and R.~Willett.
\newblock {Auto-differentiable ensemble Kalman filters}.
\newblock {\em arXiv preprint arXiv:2107.07687}, 2021.

\bibitem{daubechies2010iteratively}
I.~Daubechies, R.~DeVore, M.~Fornasier, and C.~S. G{\"u}nt{\"u}rk.
\newblock Iteratively reweighted least squares minimization for sparse
  recovery.
\newblock {\em Communications on Pure and Applied Mathematics: A Journal Issued
  by the Courant Institute of Mathematical Sciences}, 63(1):1--38, 2010.

\bibitem{ding2019ensemble}
Z.~Ding and Q.~Li.
\newblock {Ensemble Kalman sampler: Mean-field limit and convergence analysis}.
\newblock {\em SIAM Journal on Mathematical Analysis}, 53(2):1546--1578, 2021.

\bibitem{evensen1994sequential}
G.~Evans.
\newblock {Sequential data assimilation with a nonlinear quasi-geostrophic
  model using Monte Carlo methods to forecast error statistics}.
\newblock {\em {Journal of Geophysical Research: Oceans}}, 99(c5):10143--10162,
  1995.

\bibitem{evensen1996assimilation}
G.~Evans and P.~Van Leeuwen.
\newblock {Assimilation of Geosat altimeter data for the Agulhas current using
  the ensemble Kalman filter with a quasigeostrophic model}.
\newblock {\em {Monthly Weather Review}}, 124(1):85--96, 1996.

\bibitem{evensen2009data}
G.~Evensen.
\newblock {\em {Data Assimilation: the Ensemble Kalman Filter}}, volume~2.
\newblock Springer, 2009.

\bibitem{gorodnitsky1997sparse}
I.~F. Gorodnitsky and B.~D. Rao.
\newblock {Sparse signal reconstruction from limited data using FOCUSS: A
  re-weighted minimum norm algorithm}.
\newblock {\em IEEE Transactions on signal processing}, 45(3):600--616, 1997.

\bibitem{green1984iteratively}
P.~J. Green.
\newblock Iteratively reweighted least squares for maximum likelihood
  estimation, and some robust and resistant alternatives.
\newblock {\em Journal of the Royal Statistical Society: Series B
  (Methodological)}, 46(2):149--170, 1984.

\bibitem{gu2007iterative}
Y.~Gu and D.~S. Oliver.
\newblock {An iterative ensemble Kalman filter for multiphase fluid flow data
  assimilation}.
\newblock {\em Spe Journal}, 12(04):438--446, 2007.

\bibitem{hanke1997regularizing}
M.~Hanke.
\newblock {A regularizing Levenberg-Marquardt scheme, with applications to
  inverse groundwater filtration problems}.
\newblock {\em Inverse Problems}, 13(1):79--95, 1997.

\bibitem{herty2018kinetic}
M.~Herty and G.~Visconti.
\newblock Kinetic methods for inverse problems.
\newblock {\em Kinetic \& Related Models}, 12(5):1109, 2019.

\bibitem{huang2022efficient}
D.~Z. Huang, J.~Huang, S.~Reich, and A.~M. Stuart.
\newblock {Efficient derivative-free Bayesian inference for large-scale inverse
  problems}.
\newblock {\em arXiv preprint arXiv:2204.04386}, 2022.

\bibitem{iglesias2016regularizing}
M.~A. Iglesias.
\newblock {A regularizing iterative ensemble Kalman method for PDE-constrained
  inverse problems}.
\newblock {\em Inverse Problems}, 32(2):025002, 2016.

\bibitem{iglesias2013ensemble}
M.~A. Iglesias, K.~J.~H. Law, and A.~M. Stuart.
\newblock {Ensemble Kalman methods for inverse problems}.
\newblock {\em {Inverse Problems}}, 29(4):045001, 2014.

\bibitem{iglesias2021adaptive}
M.~A. Iglesias and Y.~Yang.
\newblock {Adaptive regularisation for ensemble Kalman inversion}.
\newblock {\em Inverse Problems}, 37(2):025008, 2021.

\bibitem{kabanikhin2008definitions}
S.~I. Kabanikhin.
\newblock Definitions and examples of inverse and ill-posed problems.
\newblock 2008.

\bibitem{korolev2019generalized}
V.~Y. Korolev and A.~I. Zeifman.
\newblock Generalized negative binomial distributions as mixed geometric laws
  and related limit theorems.
\newblock {\em Lithuanian Mathematical Journal}, 59(3):366--388, 2019.

\bibitem{kovachki2019ensemble}
N.~B. Kovachki and A.~M. Stuart.
\newblock {Ensemble Kalman inversion: a derivative-free technique for machine
  learning tasks}.
\newblock {\em Inverse Problems}, 35(9):095005, 2019.

\bibitem{law2021sparse}
K.~J.~H. Law and V.~Zankin.
\newblock {Sparse online variational Bayesian regression}.
\newblock {\em arXiv preprint arXiv:2102.12261}, 2021.

\bibitem{lee2021lp}
Y.~Lee.
\newblock {$\ell_p$ regularization for ensemble Kalman inversion}.
\newblock {\em SIAM Journal on Scientific Computing}, 43(5):A3417--A3437, 2021.

\bibitem{li2007iterative}
G.~Li and A.~C. Reynolds.
\newblock {An iterative ensemble Kalman filter for data assimilation}.
\newblock In {\em SPE annual technical conference and exhibition}. Society of
  Petroleum Engineers, 2007.

\bibitem{SRPR21}
J.~Pidstrigach and S.~Reich.
\newblock Affine-invariant ensemble transform methods for logistic regression.
\newblock {\em Foundations of Computational Mathematics}, pages 1--34, 2022.

\bibitem{reynolds2006iterative}
A.~C. Reynolds, M.~Zafari, and G.~Li.
\newblock {Iterative forms of the ensemble Kalman filter}.
\newblock In {\em ECMOR X-10th European conference on the mathematics of oil
  recovery}, pages cp--23. European Association of Geoscientists \& Engineers,
  2006.

\bibitem{robbins1992empirical}
H.~E. Robbins.
\newblock {An empirical Bayes approach to statistics}.
\newblock In {\em Breakthroughs in Statistics}, pages 388--394. Springer, 1992.

\bibitem{sanzstuarttaeb}
D.~Sanz-Alonso, A.~M. Stuart, and A.~Taeb.
\newblock {Inverse Problems and Data Assimilation}.
\newblock {\em arXiv preprint arXiv:1810.06191}, 2019.

\bibitem{schillings2017analysis}
C.~Schillings and A.~M. Stuart.
\newblock {Analysis of the ensemble Kalman filter for inverse problems}.
\newblock {\em SIAM Journal on Numerical Analysis}, 55(3):1264--1290, 2017.

\bibitem{schneider2017earth}
T.~Schneider, S.~Lan, A.~M. Stuart, and J.~Teixeira.
\newblock Earth system modeling 2.0: A blueprint for models that learn from
  observations and targeted high-resolution simulations.
\newblock {\em Geophysical Research Letters}, 44(24):12--396, 2017.

\bibitem{schneider2020imposing}
T.~Schneider, A.~M. Stuart, and J-L Wu.
\newblock {Imposing sparsity within ensemble Kalman inversion}.
\newblock {\em arXiv preprint arXiv:2007.06175}, 2020.

\bibitem{AS10}
A.~M. Stuart.
\newblock Inverse problems: a {B}ayesian perspective.
\newblock {\em Acta Numerica}, 19:451--559, 2010.

\bibitem{tibshirani1996regression}
R.~Tibshirani.
\newblock Regression shrinkage and selection via the lasso.
\newblock {\em Journal of the Royal Statistical Society: Series B
  (Methodological)}, 58(1):267--288, 1996.

\bibitem{tseng2001convergence}
P.~Tseng.
\newblock Convergence of a block coordinate descent method for
  nondifferentiable minimization.
\newblock {\em Journal of Optimization Theory and Applications},
  109(3):475--494, 2001.

\bibitem{ungarala2012iterated}
S.~Ungarala.
\newblock {On the iterated forms of Kalman filters using statistical
  linearization}.
\newblock {\em Journal of Process Control}, 22(5):935--943, 2012.

\bibitem{zou2006adaptive}
H.~Zou.
\newblock The adaptive lasso and its oracle properties.
\newblock {\em Journal of the American Statistical Association},
  101(476):1418--1429, 2006.

\end{thebibliography}
\end{document}